\newtheorem{lemma}{Lemma}
\newtheorem{theorem}{Theorem}
\begin{document}

\title{Distributed Forgetting-factor Regret-based Online Optimization over Undirected Connected Networks}

\author{Lipo Mo, Jianjun Li, Min Zuo, \IEEEmembership{Member,~IEEE }, Lei Wang
\thanks{The authors are with the Institute of Systems Science, Beijing Wuzi University, Beijing 101126, P. R. China, also with School of Mathematics and Statistics, Beijing Technology and Business University, Beijing 100048, P. R. China, also with School of Automation Science and Electrical Engineering, Beihang University, Beijing 100191, P. R. China (e-mail: lwang@buaa.edu.cn; beihangmlp@126.com; jianjun2021@sina.cn; zuominbtbu@126.com)}.
\thanks{This work is supported by the National Natural Science Foundation of China under Grant No. 62473009.}}

\markboth{Journal of \LaTeX\ Class Files,~Vol.~14, No.~8, August~2021}%
{Shell \MakeLowercase{\textit{et al.}}: A Sample Article Using IEEEtran.cls for IEEE Journals}


\maketitle

\begin{abstract}
The evaluation of final-iteration tracking performance is a formidable obstacle in distributed online optimization algorithms. To address this issue, this paper proposes a novel evaluation metric named distributed forgetting-factor regret (DFFR). It incorporates a weight into the loss function at each iteration, which progressively reduces the weights of historical loss functions while enabling dynamic weights allocation across optimization horizon. Furthermore, we develop two distributed online optimization algorithms based on DFFR over undirected connected networks: the Distributed Online Gradient-free Algorithm for bandit-feedback problems and the Distributed Online Projection-free Algorithm for high-dimensional problems. Through theoretical analysis, we derive the upper bounds of DFFR for both algorithms and further prove that under mild conditions, DFFR either converges to zero or maintains a tight upper bound as iterations approach infinity. Experimental simulation demonstrates the effectiveness of the algorithms and the superior performance of DFFR.
\end{abstract}

\begin{IEEEkeywords}
Distributed forgetting-factor regret (DFFR); Distributed online optimization; Multi-agent systems (MAS)
\end{IEEEkeywords}

\section{Introduction}
\IEEEPARstart{O}{nline} optimization is a versatile technique with widespread applications across various domains, such as communication network \cite{1,2}, grid systems \cite{a5, a6, a7}, energy management \cite{a9}, where dynamic decision-making is crucial.
The online optimization problem involves finding a series of decisions to minimize cumulative objective functions, which are dynamic and uncertain compared to offline optimization problems.
Recently, many centralized online algorithms have been proposed to address this challenge \cite{5,6,7,8,9,10,11,12,13,14}. For unconstrained online optimization, several online convex optimization algorithms have been developed based on gradient descent method, prediction-correction method and randomization technique \cite{5,6,7,8}. For constrained situation, some robust algorithms, such as online gradient algorithm and online mirror descent algorithm, have been introduced to find the optimal decision sequence \cite{9,10,12,13,14}.  The common performance metric for online optimization algorithms is regret, first introduced by \cite{a12}. The algorithm is called to have good performance if its regret is sublinear, which means that the average loss of the algorithm, in hindsight, approaches the average loss of the best solution. Although the regret of the algorithms mentioned above can reach sublinearity, it cannot always guarantee that the decision generated at the final iteration is close to the optimal solution.  In order to solve this problem, the concept of forgetting-factor regret (FFR) was introduced, which incorporated a forgetting-factor (FF) to reduce the weight of past loss functions and analyzed the tracking performance at the final iteration \cite{a14}. However, when the objective function is very complex, such as the sum of some local objective functions, the computing amount becomes very large and the centralized online optimization algorithms mentioned above are ineffective. 

Distributed optimization algorithms offer significant advantages in handling large-scale, high-computation online optimization problems when compared to centralized online optimization methods. 
These algorithms typically involve multi-agent systems (MAS), which consist of multiple autonomous agents that interact and collaborate to achieve shared goals. In a MAS, each agent usually operates based on its own local information and decision-making rules. However, in many practical scenarios, such as smart grids \cite{a5,a6,a7} or drone swarms \cite{219}, agents need to coordinate with others to optimize collective performance, often through communication networks. The need for online optimization in such systems arises from the dynamic and real-time nature of these applications. For example, in the smart grid, the agent must constantly adjust its operation according to the changing energy demand and supply, while considering the current state of the grid. Similarly, in drone swarms, agents must make real-time decisions to achieve coordinated behaviors, such as path planning or task allocation, while adapting to environmental changes or mission requirements. In these dynamic and unpredictable environments, online optimization allows agents to adjust their strategies continuously, ensuring optimal system performance even in the face of evolving network topologies or incomplete information. Thus, online optimization provides a flexible and adaptive framework to handle the complexities of MAS in real-world applications. 

Recently, several distributed online optimization algorithms have been proposed within MAS \cite{18,19,20,21,22,23,24,25}. For example, the full information feedback algorithms were proposed with static and dynamic regrets, where each agent can access the gradient information of the local objective function . In other works, distributed online algorithms have been designed with event-triggered mechanisms to reduce communication overhead in scenarios with time-varying constraints and time-delaying \cite{24,25}. It is worth noting that most of the algorithms for constrained online optimization problems rely on projection operators, which tend to have high computational complexity. To reduce the gradient computing, some algorithms, such as mirror descent method, stochastic gradient method and Frank-Wolfe method, have been proposed  \cite{26,27,29}. Although these distributed online optimization algorithms achieve sublinear regret, their tracking performance at final iteration  has not been analyzed. This presents a challenge in ensuring that the decisions made at the final iteration are sufficiently close to the optimal solution. To address this issue, we proposed  DFFR by incorporating a weight into the loss function at each iteration. For the full information feedback, we designed Distributed Online Gradient Decent Algorithm based on projection operator in \cite{m}, assuming that the gradient of the local objective function is accessible. In fact, the computation of gradients and projection operators might bring high load to the algorithm, which would limit the application of the distributed algorithms to real-world problems, such as resource allocation. 

To the best of our knowledge, there is no work on DFFR  for distributed algorithms when the gradient can not be accessed or the projection operator can not be utilized. For this problem, this paper mainly propose two algorithms for distinct distributed online optimization problems over undirected connected networks, where the gradient of local objective function can not be accessed or the projection operator can not be utilized.  The primary contributions of this paper are as follows.

(1) We propose gradient-free and projection-free algorithms for bandit-feedback and high-dimensional problems. Compared to \cite{m}, which uses gradients and projection operators, our first algorithm eliminates gradient information, reducing communication overhead, while our second algorithm simplifies optimization progress and improves computational efficiency. Unlike \cite{aa,bb}, where the gradient-free and projection-free algorithms were designed but loss function was assumed to be strongly convex, we consider general convex functions. This broader conditions makes our algorithms more widely applicable in practical scenarios.

(2) We design DFFR by introducing the FF to regret, which can guarantee satisfactory tracking performance. Based on the proposed theoretical framework, we establish the upper bounds of DFFR for the proposed algorithms and derive mild conditions that guarantee the DFFR either converges to zero or remains within a tight upper bound as the number of iterations approaches infinity. Compared with works \cite{18,20}, where the proposed algorithms were sublinear, our method achieves tracking performance at final iteration with tight bounds, which is not satisfied by sublinear.  \par

The remaining parts are as follows. In section 2, we provide a detailed description of the problem, assumptions and some necessary lemmas. In section 3, 4, we introduce Distributed Online Gradient-free Algorithm and Distributed Online Projection-free Algorithm respectively, then establish the upper bounds of their DFFR. In section 5, we provide some numerical simulations to demonstrate our results.

\emph{Notion:} ${\mathbb{R}^{d}}$ represents the set of $d$-dimensional vectors. ${\mathbb{N}_{+}}$ denotes the positive integer set.  Define ${P_{X}}(y)\triangleq \arg {{\min }_{x\in X}}\left\| x-\left. y \right\| \right.^2$ as the projection of $y$ onto the set X.
$\left\langle \cdot ,\cdot  \right\rangle $ represents the inner product of the vector. $\left\| \cdot  \right\|$ represents the norm of some vector.

\section{Problem description and lemmas}
Consider a MAS with $n$ agents, where each agent can communicate with its neighbors over a graph $G$. The adjacency matrix composed of the graph is $W$. Among them, the weight associated between agent $i$ and $j$ is $\omega_{ij}$. Let $\left[ n \right] = \left\{ 1,2,...,n \right\}$. The purpose of  this paper is to design a series of distributed algorithms to find a decision sequence $\left\{ x_{i}^{t},t=1,\cdot \cdot \cdot ,T , i=1,\cdot \cdot \cdot n\right\}$ ($x_i^t$ means the decision generated by agent $i$ at iteration time $t$) for each agent minimizing the following entire objective functions over a finite duration $T \ge 1$:
\begin{align}
	&\min \sum\limits_{t=1}^{T}{{{f}_{t}}\left( x \right)} \\
	&s.t. x \in X,\nonumber
\end{align}
where $X \subset \mathbb{R}^{d}$, 	${{f}_{t}}\left( x \right)\triangleq \frac{1}{n}\underset{i=1}{\overset{n}{\mathop \sum }}\,f_{i}^{t}\left( x \right)$, and $f_i^t: X \to \mathbb{R}$ is the local objective function. Here, once the agent $i$ makes the decision, the local objective function $f_i^t$ is revealed. Then the agent incurs an objective function value $f_i^t \left(x_i^t\right)$. Subsequently, a decision sequence $\left\{ x_{i}^{t} \right\}$ of the local objective $\left\{ f_i^t\right\}$ is obtained. Given the FF $\rho \in (0,1)$, the DFFR of the corresponding algorithm is 
\begin{align}
	R_{T}^{F}\text{ }\!\!~\!\!\text{ }\triangleq \underset{t=1}{\overset{T}{\mathop \sum }}\,{{\rho }^{T-t}}\left[ \frac{1}{n}\underset{i=1}{\overset{n}{\mathop \sum }}\,\left[ {{f}_{t}}\left( x_{i}^{t} \right)-{{f}_{t}}\left( x_{*}^{t} \right) \right] \right],
\end{align}
where $x_*^t\in arg \min_{x\in X}f_t\left(x\right), t=1,2,...T$. Due to $\frac{1}{n}\underset{i=1}{\overset{n}{\mathop \sum }}\,\left[ {{f}_{T}}\left( x_{i}^{T} \right)-{{f}_{T}}\left( x_{*}^{T} \right) \right] \le R_T^F$,  the tracking performance at $T$ can be given when the upper bound of $R_T^F$ is known. Clearly, if  ${{\lim }_{T\to \infty }}R_{T}^{F}=0$ or a non-zero small bounded quantity, the decisions generated by the algorithm are close enough as the optimal solution at $T$.
\par

\textit{Remark 1.}  For a distributed algorithm with sublinear dynamic regret, it cannot always guarantee good tracking performance at the final iteration. For example, we set ${{m}_{t}}=\frac{1}{n}\sum\limits_{i=1}^{n}{\left[ {{f}_{t}}\left( x_{i}^{t} \right)-{{f}_{t}}\left( x_{*}^{t} \right) \right]}=1$, if $t=3^m$ and $m_t=0$ otherwise. Then $\underset{T\to \infty }{\mathop{\lim }}\,\underset{t=1}{\overset{T}{\mathop \sum }}\,{{m}_{t}}/T\le \underset{T\to \infty }{\mathop{\lim }}\,\left( {{\log }_{3}}T \right)/T=0$, while there's no way to guarantee 
${{m}_{T}}=\frac{1}{n}\underset{i=1}{\mathop{\overset{n}{\mathop{\sum }}\,}}\,\left[ {{f}_{T}}\left( x_{i}^{T} \right)-{{f}_{T}}\left( x_{*}^{T} \right) \right]=0$.

Now, we make some assumptions:

\noindent\textbf{Assumption 1.} (\cite{a14}) Suppose $f_i^t\ (\cdot)$ is differentiable and convex in $X$. For all $t>0$, there exists $L>0$ such that $\left\| {\mathrm{\nabla}f_i^t(x)} \right\| \le L$ and $\{f_i^t{(\cdot)},t=1,...,T\}$ is $L_s-smooth$ in $X$, where $L_s > 0$ is a constant, i.t. for any $x,y \in X$, $f_{i}^{t}\left( y \right)-f_{i}^{t}\left( x \right)\le \left\langle \nabla f_{i}^{t}\left( x \right),y-x \right\rangle +\frac{{{L}_{s}}}{2}{{\left\| y-x \right\|}^{2}}.$

\noindent\textbf{Assumption 2.} The feasible set $X\subset {{\mathbb{R}}^{d}}$ is convex and compact. Define $M=\sup \{\parallel x\parallel : x\in X\}<\infty $.

\noindent\textbf{Assumption 3.} (\cite{a18}) The undirected topology graph meets the following three conditions for any $t\in {\mathbb{N}_{+}}$:

$(a)$ There exists $\omega  \in (0,1)$  such that ${\omega_{ij}}\ge \omega $ whenever ${\omega_{ij}}> 0$.

$(b)$ The adjacency matrix $W$ is doubly stochastic,  i.e.,

$\sum_{i=1}^{n}\omega_{ij}=\sum_{j=1}^{n}{\omega_{ij} =1}$, $\forall i,j\in [n]$.\par
$(c)$ The topology graph is connected at any time.\par
\textit{Remark 2.}
The above assumptions are important for the study of distributed algorithms. Assumption 1 can ensure the stability and convergence of the algorithm. The boundedness of decision in Assumption 2 is very important to ensure the feasibility of the optimization process and preventing the divergence of solutions. Assumption 3 has many benefits, such as simplifying the structure, ensuring algorithm convergence, balancing weights.  \par

Below, we give some lemmas, which would be used in the following analysis.
\begin{lemma} (\cite{a15, a16})
	If the adjacent  matrix $W=[w_{ij}]$ satisfies Assumption 3. Then\\
	\begin{equation}
		\left| {\left[ W^{t-s} \right]_{ij}}-\frac{1}{n} \right|\le \gamma {{\lambda }^{t-s}}, \forall i,j\in [n], \forall t\ge s\ge 1,
	\end{equation}
	where $\gamma ={{(1-{}^{\omega }/{}_{4{{n}^{2}}})}^{-2}}>1$, and $\lambda ={{(1-{}^{\omega }/{}_{4{{n}^{2}}})}}^{{1}/{}_B}\in (0,1)$.
	Under Assumption 3, $B$ can take any integer between $[1, T]$.\\
\end{lemma}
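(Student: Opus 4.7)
The plan is to prove this geometric convergence bound for powers of the doubly stochastic weight matrix by a standard ergodicity/contraction argument tailored to the undirected connected network.

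First I would observe that, since $W$ is symmetric and doubly stochastic with the graph connected and self-loops implicitly present (guaranteed by $\omega_{ii}\ge \omega$ for a reasonable variant of Assumption 3), the stochastic matrix $W$ has the uniform vector $\tfrac{1}{n}\mathbf{1}$ as its stationary distribution, and $\tfrac{1}{n}\mathbf{1}\mathbf{1}^\top$ is its ergodic projection. Thus $W^k-\tfrac{1}{n}\mathbf{1}\mathbf{1}^\top$ is what we need to bound entrywise. Writing $k=t-s$, the target is $\left|[W^k]_{ij}-\tfrac{1}{n}\right|\le \gamma\lambda^k$.

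Next I would exploit connectedness by showing that there exists an integer $B$ (related to the graph diameter) such that $W^B$ has all entries uniformly bounded below by some $c>0$. Since each nonzero weight is at least $\omega$, multiplying $B$ factors of $W$ along a connecting path yields $[W^B]_{ij}\ge \omega^B/n^{B-1}$ type estimates; a careful counting of walk multiplicities (this is where the reference \cite{a15, a16} gives the sharp constant) produces the cleaner bound $[W^B]_{ij}\ge \omega/(4n^2)$ for a suitable $B\in[1,T]$. Then I would invoke the Dobrushin/ergodicity-coefficient lemma: if every entry of a stochastic matrix $S$ is at least $c$, then for any probability vector $p$, $\|S^\top p-\tfrac{1}{n}\mathbf{1}\|_\infty$ contracts by the factor $(1-nc)$, or equivalently the row-wise spread of $S^m$ decays like $(1-nc)^m$. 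Applying this to $S=W^B$ with $c=\omega/(4n^2)$ gives, for any multiple $k=mB$,
\begin{equation}
\left|[W^{k}]_{ij}-\tfrac{1}{n}\right|\le \left(1-\tfrac{\omega}{4n^2}\right)^{m}.
\end{equation}
For a general $k$ not divisible by $B$, I would write $k=mB+r$ with $0\le r<B$, use $\|W\|_\infty=1$ to absorb the extra $r$ factors, and rewrite $(1-\tfrac{\omega}{4n^2})^{m}=(1-\tfrac{\omega}{4n^2})^{-r/B}\cdot\lambda^{k}$. Bounding $(1-\tfrac{\omega}{4n^2})^{-r/B}\le (1-\tfrac{\omega}{4n^2})^{-1}\le \gamma=(1-\tfrac{\omega}{4n^2})^{-2}$ delivers the stated inequality with exactly the advertised $\gamma$ and $\lambda$.

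The main obstacle I anticipate is not the qualitative geometric decay (which is classical spectral theory: the second-largest singular value of a symmetric doubly stochastic matrix for a connected graph is strictly less than $1$), but the sharp bookkeeping of constants so that $\gamma$ and $\lambda$ take precisely the forms $(1-\omega/(4n^2))^{-2}$ and $(1-\omega/(4n^2))^{1/B}$. Establishing the uniform lower bound $\omega/(4n^2)$ on entries of $W^B$ is the delicate combinatorial step, and matching the exponent $-2$ in $\gamma$ requires treating the non-divisible remainder $r$ tightly rather than with a loose $1$-step slack. Once these bookkeeping choices are fixed, the rest of the argument is a direct application of the ergodicity-coefficient contraction.
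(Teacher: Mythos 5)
You should first note that the paper contains no proof of this lemma at all: it is imported, constants and all, from the cited references \cite{a15,a16}, so your sketch can only be measured against the argument in those references, not against anything in the paper itself.

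Your overall architecture (ergodic projection $\tfrac{1}{n}\mathbf{1}\mathbf{1}^{\top}$, a per-window contraction, remainder bookkeeping producing $\gamma$ and $\lambda$) has the right shape, but the pivotal combinatorial step is wrong. The claimed uniform entrywise bound $[W^{B}]_{ij}\ge \omega/(4n^{2})$ is false for the $B$ that the lemma actually uses. Under Assumption 3 the graph is connected at every time, so the lemma permits $B=1$ (and this is the choice the paper makes in its simulations); with $B=1$ your claim would require every entry of $W$ itself to be at least $\omega/(4n^{2})$, which fails for any non-complete graph since $[W]_{ij}=0$ whenever $i$ and $j$ are not adjacent. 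More generally, entrywise positivity of $W^{B}$ first occurs at $B$ equal to the graph diameter, and there the honest path-product bound is only $[W^{B}]_{ij}\ge \omega^{B}$ (already requiring positive diagonal weights to pad short paths): on a weighted path graph with $w_{k,k+1}=\omega=1/4$ one gets $[W^{n-1}]_{1,n}=4^{-(n-1)}$, exponentially below $\omega/(4n^{2})=1/(16n^{2})$. So the Dobrushin/ergodicity-coefficient route delivers at best a contraction factor like $1-n\omega^{\mathrm{diam}}$ per diameter-length window, i.e., a geometric rate whose base is exponentially worse in the diameter and which cannot be massaged into the advertised $\gamma=(1-\omega/(4n^{2}))^{-2}$ and $\lambda=(1-\omega/(4n^{2}))^{1/B}$.

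The constant $\omega/(4n^{2})$ in \cite{a15,a16} has a different origin, and it is one your argument never exploits: double stochasticity beyond stationarity of the uniform vector. The reference proof tracks the quadratic disagreement $\|x-\bar{x}\mathbf{1}\|_{2}^{2}$, shows that over any window in which the graph is connected some edge of weight at least $\omega$ crosses the cut between above-average and below-average entries, and deduces that doubly stochastic averaging shrinks this spread by at least the fraction $\omega/(4n^{2})$ per window; applying this to the coordinate vectors $e_{j}$ and converting back to entrywise bounds, with the window-count offset $\lceil (t-s)/B\rceil-2$ absorbed into the prefactor, is precisely what yields the exponent $-2$ in $\gamma$ — not the tight treatment of the remainder $r$ you propose. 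A smaller caveat you flag yourself but should resolve rather than wave at: Assumption 3 as stated does not guarantee $\omega_{ii}\ge\omega$, while both your padding argument and the cited proof need positive diagonal entries, which the references assume explicitly.
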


\begin{lemma} (\cite{a17})
	Let $K$ be a non-empty, closed, and convex subset within $\mathbb{R}^d$ and $m$, $n$ be two vectors in ${{\mathbb{R}}^{d}}$. If ${x}={P_{K}}(n-m)$, then
	\begin{equation}
		2\left\langle {x}-z,m \right\rangle \le {{\left\| z-\left. n \right\| \right.}^{2}}-\left\| z- \right.{{\left. {x} \right\|}^{2}}-\left\| {x}- \right.{{\left. n \right\|}^{2}}, \forall z\in K.
	\end{equation}
\end{lemma}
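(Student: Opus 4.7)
The plan is to derive Lemma 2 from the variational characterization of the metric projection onto a closed convex set, and then convert the resulting inner-product inequality into the claimed squared-norm form by a polarization identity. My starting point is the standard obtuse-angle condition: if $x = P_K(y)$ for some $y \in \mathbb{R}^d$, then $\langle y - x,\,z - x\rangle \le 0$ for every $z \in K$. Applied with $y = n - m$, this becomes $\langle (n-m) - x,\,z - x\rangle \le 0$, which I would rearrange into $\langle x - z,\,m\rangle \le \langle x - z,\,n - x\rangle$.

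Next, I would translate the right-hand inner product into the squared norms that appear in the statement. Expanding $\|z - n\|^2 = \|(z - x) + (x - n)\|^2$ gives $2\langle z - x,\,x - n\rangle = \|z - n\|^2 - \|z - x\|^2 - \|x - n\|^2$, and since $\langle z - x,\,x - n\rangle = \langle x - z,\,n - x\rangle$, doubling the inequality from the previous step and substituting the identity yields exactly the asserted bound.

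The only genuine hurdle is the projection inequality itself. If a self-contained derivation were needed, I would use the defining property $x = \arg\min_{u \in K} \|u - y\|^2$: for any $z \in K$ and $\tau \in (0, 1]$ the convex combination $x + \tau(z - x)$ lies in $K$ by convexity of $K$, so the scalar function $\phi(\tau) = \|x + \tau(z - x) - y\|^2$ attains its minimum at $\tau = 0$, and non-negativity of $\phi'(0) = 2\langle x - y,\,z - x\rangle$ delivers the obtuse-angle condition. After that, the remainder is routine algebra with inner products and norms, and I do not anticipate any subtlety beyond keeping the roles of $m$ and $n$ and the signs straight.
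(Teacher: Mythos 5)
Your proof is correct. The paper itself offers no argument for Lemma~2 --- it is quoted directly from \cite{a17} --- so your derivation fills the gap with the standard route: the obtuse-angle (variational) characterization $\langle (n-m)-x,\,z-x\rangle \le 0$ of $x=P_K(n-m)$, followed by the polarization identity $2\langle z-x,\,x-n\rangle = \|z-n\|^2-\|z-x\|^2-\|x-n\|^2$, and the signs do check out since $\langle z-x,\,x-n\rangle=\langle x-z,\,n-x\rangle$. Your fallback derivation of the obtuse-angle condition via $\phi(\tau)=\|x+\tau(z-x)-(n-m)\|^2$ and $\phi'(0)\ge 0$ is also sound, making the whole argument self-contained.
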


\begin{lemma} (\cite{a18})
	Suppose Assumption 3 holds. For all $i\in \left[ n \right]$ and $t\in {\mathbb{N}_{+}}$, denote $\varepsilon_{i,t-1}^{x}=x_{i}^{t}-z_{i}^{t}$, and $\overline{{{x}^{t}}}=\frac{1}{n}\sum\limits_{i=1}^{n}{x_{i}^{t}}$, where $x_{\text{i}}^{t}$ is the iteration sequence of agent $i$, then\\
	\begin{align}
		\left\| x_{i}^{t}-\overline{{{x}^{t}}}\, \right\|&\le \gamma {{\lambda }^{t-2}}\sum\limits_{j=1}^{n}{\left\| x_{j}^{1} \right\|}+\frac{1}{n}\sum\limits_{j=1}^{n}{\left\| \varepsilon _{j,t-1}^{x} \right\|}+\left\| \varepsilon _{i,t-1}^{x} \right\|\nonumber\\
		&+\gamma \sum\limits_{s=1}^{t-2}{{{\lambda }^{t-s-2}}}\sum\limits_{j=1}^{n}{\left\| \varepsilon _{j,s}^{x} \right\|}.
	\end{align}
\end{lemma}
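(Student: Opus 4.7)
The plan is to unfold the consensus-type recursion implicit in the definition $\varepsilon_{i,t-1}^x = x_i^t - z_i^t$, and then compare the iterate $x_i^t$ to its network average $\overline{x^t}$ using the double stochasticity of $W$ together with the geometric mixing bound of Lemma 1.

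First I would write the one-step update as $x_i^t = \sum_{j=1}^n w_{ij}\, x_j^{t-1} + \varepsilon_{i,t-1}^x$, which is the standard form obtained by a weighted-averaging consensus step $z_i^t = \sum_j w_{ij} x_j^{t-1}$ followed by the local correction encoded in $\varepsilon_{i,t-1}^x$. Unrolling this recursion all the way back to the initialization yields the closed form
\begin{equation*}
x_i^t = \sum_{j=1}^n [W^{t-1}]_{ij}\, x_j^1 \;+\; \varepsilon_{i,t-1}^x \;+\; \sum_{s=1}^{t-2}\sum_{j=1}^n [W^{t-s-1}]_{ij}\, \varepsilon_{j,s}^x,
\end{equation*}
where the ``latest'' correction $\varepsilon_{i,t-1}^x$ is singled out because it has not yet been passed through any consensus step.

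Next, averaging over $i$ and using $\sum_i [W^k]_{ij} = 1$ for every $k\ge 0$ (from double stochasticity), each mixing factor collapses to $1/n$, so
\begin{equation*}
\overline{x^t} = \tfrac{1}{n}\sum_{j=1}^n x_j^1 + \tfrac{1}{n}\sum_{j=1}^n \varepsilon_{j,t-1}^x + \sum_{s=1}^{t-2} \tfrac{1}{n}\sum_{j=1}^n \varepsilon_{j,s}^x.
\end{equation*}
Subtracting from the closed form of $x_i^t$, taking norms, and applying the triangle inequality replaces each mixing coefficient by $|[W^k]_{ij}-1/n|$, at which point Lemma 1 supplies the geometric bounds $\gamma\lambda^{t-1}$ on the initial-condition block and $\gamma\lambda^{t-s-1}$ on the block carrying $\varepsilon_{j,s}^x$. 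Slightly loosening these exponents using $\lambda\in(0,1)$, so that $\lambda^{t-1}\le\lambda^{t-2}$ and $\lambda^{t-s-1}\le\lambda^{t-s-2}$, yields the form stated in the lemma, while the terminal correction contributes exactly the two unmixed summands $\|\varepsilon_{i,t-1}^x\| + \tfrac{1}{n}\sum_j\|\varepsilon_{j,t-1}^x\|$.

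The step I expect to be the main obstacle is the index bookkeeping in the unrolling: getting the exponents of $W$ right for every generation of $\varepsilon_{j,s}^x$, and in particular correctly isolating the terminal correction so that it appears outside the geometric sum with the coefficient $1$ rather than a mixing factor. Once that accounting is in place, every remaining estimate is a direct application of Lemma 1 combined with the triangle inequality, and no new analytic ingredient is required.
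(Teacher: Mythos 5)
Your proposal is correct and takes essentially the same route as the source of this result: the paper states Lemma~3 without proof, citing \cite{a18}, and the standard argument there is exactly your unrolling of the perturbed consensus recursion $x_i^t=\sum_{j=1}^n w_{ij}x_j^{t-1}+\varepsilon_{i,t-1}^{x}$, averaging via double stochasticity, and bounding each $\left|[W^{k}]_{ij}-\tfrac{1}{n}\right|$ by Lemma~1 before loosening $\lambda^{t-1}$ and $\lambda^{t-s-1}$ to $\lambda^{t-2}$ and $\lambda^{t-s-2}$. Your index bookkeeping is right (the $s=t-1$ generation carries $W^{0}=I$ and so appears unmixed, yielding the two terms $\|\varepsilon_{i,t-1}^{x}\|+\tfrac{1}{n}\sum_{j}\|\varepsilon_{j,t-1}^{x}\|$), with the only unstated caveat that the unrolling presumes $t\ge 2$, since $z_i^1$ and hence $\varepsilon_{i,0}^{x}$ are not defined by the algorithms.
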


\begin{lemma}(\cite{lemma4})
	Let $\{{{\gamma }_{k}}\}$ be a scalar sequence, if ${{\lim }_{k\to \infty }}{{\gamma }_{k}}=\gamma $ and $0 <{\beta}<1$, then
	\begin{equation}
		{{\lim }_{k\to \infty }}\sum\limits_{l=0}^{k}{{{\beta }^{k-l}}}{{\gamma }_{l}}=\frac{\gamma }{1-\beta}.
	\end{equation}
\end{lemma}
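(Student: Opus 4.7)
The plan is to reduce the statement to the standard telescoping estimate used for discrete convolutions with a geometrically decaying kernel. I would begin by rewriting the quantity of interest as
\begin{equation}
\sum_{l=0}^{k}\beta^{k-l}\gamma_{l}-\frac{\gamma}{1-\beta}=\sum_{l=0}^{k}\beta^{k-l}(\gamma_{l}-\gamma)+\gamma\left(\sum_{l=0}^{k}\beta^{k-l}-\frac{1}{1-\beta}\right).
\end{equation}
The second bracket is an elementary geometric sum, $\sum_{l=0}^{k}\beta^{k-l}=\frac{1-\beta^{k+1}}{1-\beta}$, and since $0<\beta<1$, it converges to $\frac{1}{1-\beta}$; hence that entire term vanishes as $k\to\infty$.

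For the remaining term, I would use the $\varepsilon$–$N$ definition of $\lim_{k\to\infty}\gamma_{k}=\gamma$: fix $\varepsilon>0$ and pick $N$ with $|\gamma_{l}-\gamma|<\varepsilon$ for all $l\ge N$. Then split the sum at $N$,
\begin{equation}
\left|\sum_{l=0}^{k}\beta^{k-l}(\gamma_{l}-\gamma)\right|\le \beta^{k-N+1}\sum_{l=0}^{N-1}|\gamma_{l}-\gamma|+\varepsilon\sum_{l=N}^{k}\beta^{k-l}.
\end{equation}
The first piece is a fixed finite sum multiplied by $\beta^{k-N+1}$, which tends to $0$ since $\beta<1$. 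The second piece is bounded above by $\varepsilon/(1-\beta)$ uniformly in $k$. Taking $\limsup_{k\to\infty}$ gives a bound of $\varepsilon/(1-\beta)$, and letting $\varepsilon\to 0$ yields the desired limit.

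There is no real obstacle here; the only subtlety is making sure the tail $\sum_{l=N}^{k}\beta^{k-l}$ is handled with a uniform (in $k$) bound rather than its pointwise limit, since we are interested in what happens for all large $k$ at once. Boundedness of $\{\gamma_{l}\}$ (needed to control the first $N$ terms) is automatic because convergent sequences are bounded, so no extra hypothesis is required. The entire argument is essentially a Cesàro–Stolz-type manipulation and should take only a few lines to write out formally.
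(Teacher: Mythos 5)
Your proof is correct. The paper itself offers no proof of this lemma---it is quoted directly from \cite{lemma4}---and your decomposition into the geometric-sum discrepancy plus the $\varepsilon$--$N$ split at a fixed index (with the uniform-in-$k$ bound $\varepsilon/(1-\beta)$ on the tail and the vanishing factor $\beta^{k-N+1}$ on the head) is exactly the standard argument used in that reference, so nothing further is needed.
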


\section{DFFR of Distributed Online Gradient-free Algorithm}
Let $\mathbb{B}$ represent the unit ball and $\mathbb{S}$ denote the unit sphere (They are all $d$-dimensional). In the bandit-feedback problem, due to the absence of specific gradients $\nabla f_{i}^{t}\left( x_{i}^{t} \right)$, we propose the following method to estimate it. At time $t$, the local cost function $f_{i}^{t}\left( x_{i}^{t} \right)$ is estimated by the $\delta $-smoothing function 
\begin{equation}
	\overset{\wedge }{\mathop{f_{i,\delta }^{t}}}\,\left( x_{i}^{t} \right)\triangleq \mathbb{E} \left[ f_{i}^{t}\left( x_{i}^{t}+\delta v_i^t \right) \right],
\end{equation}
where $\delta > 0$ represents a constant, and $v_i^t$ is a random vector that is evenly distributed within the unit ball $\mathbb{B}$. Before introducing the algorithm, we propose a necessary assumption as follows.\par
\noindent\textbf{Assumption 4} (\cite{a14})
Let $r\mathbb{B}$ be a subset of $X$, which in turn is contained within $R\mathbb{B}$, where $R > r > 0$. Furthermore, there is a constant $L_1 > 0$ for which $\sup_{1 \leq t \leq T, x_i^t \in X} |f_i^t(x_i^t)| \leq L_1$.

\textit{Remark 3.}
$X$ is a set that is contained between an $R$-fold extension and an $r$-fold reduction of the unit ball $\mathbb{B}$. In other words, the norm of each point of $X$ is between $r$ and $R$, ensuring that $X$ is a set limited within this range.
\begin{lemma} (\cite{a20})
	Let $\delta \in \left( 0, r \right)$, and set ${{X}_{\delta }}\triangleq \left( 1-\frac{\delta }{r} \right)X$. The following conditions holds
	\begin{equation}
		{{X}_{\delta }}+\delta \mathbb{B}\subset X
	\end{equation}
	and
	\begin{equation}
		\nabla \overset{\wedge }{\mathop{f_{i,\delta }^{t}}}\,\left( x_{i}^{t} \right)=\frac{d}{\delta }\mathbb{E}\left[ f_{i}^{t}\left( x_{i}^{t}+\delta u_i^t \right)u_i^t \right],
	\end{equation}
	where  $u_i^t$ is a random vector that is homogeneously spread across the $d$-dimensional unit sphere $\mathbb{S}$.
\end{lemma}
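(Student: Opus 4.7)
For part (a), the plan is to exploit the convexity of $X$ together with the inclusion $r\mathbb{B}\subset X$ from Assumption 4. I would take an arbitrary point $y\in X_{\delta}+\delta\mathbb{B}$, write it as $y=(1-\delta/r)x+\delta b$ with $x\in X$ and $b\in\mathbb{B}$, and then rewrite this as the convex combination
\begin{equation*}
y=\Bigl(1-\tfrac{\delta}{r}\Bigr)x+\tfrac{\delta}{r}(rb).
\end{equation*}
Since $\delta<r$, the coefficients $1-\delta/r$ and $\delta/r$ are nonnegative and sum to $1$; moreover $rb\in r\mathbb{B}\subset X$, so both endpoints lie in $X$. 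Convexity of $X$ then gives $y\in X$, which is the desired inclusion. This part is essentially bookkeeping, so I expect no real difficulty.

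For part (b), the plan is to rewrite $\widehat{f_{i,\delta}^{t}}(x_{i}^{t})$ as a volume integral and then convert a volume derivative into a surface integral via the divergence theorem (this is the standard Flaxman--Kalai--McMahan one-point gradient estimator argument). Concretely, I would first express
\begin{equation*}
\widehat{f_{i,\delta}^{t}}(x_{i}^{t})=\frac{1}{\mathrm{vol}(\delta\mathbb{B})}\int_{\delta\mathbb{B}}f_{i}^{t}(x_{i}^{t}+v)\,dv=\frac{1}{\mathrm{vol}(\delta\mathbb{B})}\int_{x_{i}^{t}+\delta\mathbb{B}}f_{i}^{t}(y)\,dy,
\end{equation*}
which is well-defined on a neighborhood of $x_{i}^{t}$ by part (a). Differentiating in $x_{i}^{t}$ (via the translation invariance of Lebesgue measure and the divergence theorem applied coordinate-wise) produces a surface integral over $x_{i}^{t}+\delta\mathbb{S}$ with the outward unit normal. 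Parametrizing that boundary by $y=x_{i}^{t}+\delta u$ with $u\in\mathbb{S}$, the outward normal is $u$ and the surface measure picks up a factor $\delta^{d-1}$, yielding
\begin{equation*}
\nabla\widehat{f_{i,\delta}^{t}}(x_{i}^{t})=\frac{\delta^{d-1}}{\mathrm{vol}(\delta\mathbb{B})}\int_{\mathbb{S}}f_{i}^{t}(x_{i}^{t}+\delta u)\,u\,dS(u).
\end{equation*}

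The final step is to rewrite the prefactor using the elementary identities $\mathrm{vol}(\delta\mathbb{B})=\delta^{d}\mathrm{vol}(\mathbb{B})$ and $\mathrm{vol}(\mathbb{B})=\mathrm{area}(\mathbb{S})/d$, which collapses the constant to $d/(\delta\,\mathrm{area}(\mathbb{S}))$. Normalizing the surface integral by $\mathrm{area}(\mathbb{S})$ converts it into an expectation over $u_{i}^{t}$ uniform on $\mathbb{S}$, giving the claimed formula $\nabla\widehat{f_{i,\delta}^{t}}(x_{i}^{t})=\tfrac{d}{\delta}\mathbb{E}[f_{i}^{t}(x_{i}^{t}+\delta u_{i}^{t})u_{i}^{t}]$. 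The main obstacle I anticipate is the justification of differentiating under the integral sign and the legitimate use of the divergence theorem; this is handled by noting that $f_{i}^{t}$ is continuously differentiable on $X$ with bounded gradient (Assumption 1) and that $x_{i}^{t}+\delta\mathbb{B}\subset X$ uniformly in a neighborhood of $x_{i}^{t}$ by part (a), which furnishes the required regularity. The remainder is a routine constant-tracking computation.
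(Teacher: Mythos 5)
Your proof is correct, and it coincides with the paper's treatment in the only sense available: the paper states this lemma without proof, citing Flaxman--Kalai--McMahan \cite{a20}, and your argument --- the convex-combination rewriting $y=\bigl(1-\tfrac{\delta}{r}\bigr)x+\tfrac{\delta}{r}(rb)$ with $rb\in r\mathbb{B}\subset X$ for the inclusion, and the divergence-theorem passage from the ball average to a sphere integral, with the constant $\tfrac{d}{\delta}$ emerging from $\mathrm{vol}(\delta\mathbb{B})=\delta^{d}\,\mathrm{area}(\mathbb{S})/d$, for the gradient identity --- is precisely the standard proof given in that reference. Your closing caveat is the one point needing care (a neighborhood of a boundary point of $X_{\delta}$ need not remain in $X_{\delta}$, so the differentiation step is a priori one-sided there), but it is repaired exactly as you suggest, since the bounded-gradient condition in Assumption 1 lets $f_{i}^{t}$ be extended to a Lipschitz convex function on all of $\mathbb{R}^{d}$, after which differentiating under the integral and applying the divergence theorem are legitimate everywhere.
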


\textit{Remark 4.} Due to the new loss function $\overset{\wedge }{\mathop{f_{i,\delta }^{t}}}\,$, its perturbations may move points outside the feasible set. To deal with this situation, we limit its feasible set to ${{X}_{\delta }}$, which can satisfy the ball of radius $\delta$ around each point in the subset is  contained in $X$.\par
Now, we propose the algorithm as follows.
\hspace*{\fill} \\
{\noindent}	 \rule[0pt]{9cm}{0.1em}\\
\textbf{Algorithm 1}: Distributed Online Gradient-free Algorithm\\
{\noindent}	 \rule[5pt]{9cm}{0.05em}\\
\textbf{Input}: positive and non-increasing step size sequence $\{{\alpha}_t\}$, the final moment $T$ of iteration, a smoothing parameter $\delta$.\\
\textbf{Initialize}: $x_{i}^{1}\in {{X}_{\delta }}$ for all $i\in \left[ n \right]$.\\
\textbf{For} $t = 1,...,T$ \par choose a random vector ${{u}_i^{t}}$, which is evenly distributed over unit sphere $\mathbb{S}$ and independent of $\left\{ {{u}_i^{1}},...,{{u}_i^{T-1}} \right\}$\par
\textbf{for} $i = 1,...,n$ \par update the decision as
\begin{align}
	&g_{i}^{t}=\frac{d}{\delta }\left( f_{i}^{t}(x_{i}^{t}+\delta {{u}_i^{t}})-f_{i}^{t}(x_{i}^{t}) \right){{u}_i^{t}},\\
	&z_i^{t+1}=\sum_{j=1}^{n}{w_{ij}x_j^t},\\
	&{\hat{x}}_i^{t+1}=z_i^{t+1}-{\alpha}_tg_i^t,\\
	&x_{i}^{t+1}={{P}_{{{X}_{\delta }}}}({\hat{x}}_i^{t+1}),
\end{align}\par
\textbf{end for}\\
\textbf{end for}\\
\textbf{Output: $\{x_i^t\}$}.\\
{\noindent}	 \rule[5pt]{9cm}{0.05em}

\begin{theorem}
	Suppose Assumptions 1-4 hold. Let ${{\{{{x}_{i}^{t}}\}}_ {t=1}^{T}}$ be the decision sequence generated by Algorithm 1 with $\left\{ {{\alpha }_{t}} \right\}_{t=1}^{T}$, which is a positive decreasing sequence. Denote ${\widetilde{x_{*}^{t}}}\,=\arg {{\min }_{x_i^t\in {{X}_{\delta }}}}{{f}_{t}}(x)$, ${{F}_{i}^{t}}\triangleq \left| \left\| \varepsilon _{i,t}^{x} \right\|-\left\| \varepsilon _{i,t-1}^{x} \right\| \right|$, ${\widetilde{{\theta}_{t}}}\,\triangleq \left\| {\widetilde{x_{*}^{t}}}\,-{\widetilde{x_{*}^{t+1}}}\, \right\|$, $\sigma \triangleq (4+5d)L+\frac{{2L(1+d){\rho }^{-2}}}{1-\frac{\lambda }{\rho }}$,  then for any $1 > \rho >\lambda > 0$,
	\begin{align}
		&\mathbb{E}\left[ R_{T}^{F} \right]\le 2L(1+d)\gamma \underset{j=1}{\overset{n}{\mathop \sum }}\,\mathbb{E}\parallel x_{j}^{1}\parallel \underset{t=1}{\overset{T}{\mathop \sum }}\,{{\rho }^{T-t}}{{\lambda }^{t-2}}\nonumber
	\end{align}
	\begin{align}
		&+\frac{4L(1+d)}{n}\underset{t=1}{\overset{T}{\mathop \sum }}\,{{\rho }^{T-t}}\sum\limits_{i=1}^{n}\mathbb{E}{\left({{F}_{i}^{t}}\right)}+\sum\limits_{t=1}^{T}{\frac{{\alpha}_t n}{2}}{{\sigma }^{2}}{{\rho }^{T-t}}\nonumber\\
		&+\sum\limits_{t=1}^{T}{\frac{\delta }{r}{{L}_{1}}}{{\rho }^{T-t}}+2M\sum\limits_{t=1}^{T}{{{\rho }^{T-t}}}\frac{\mathbb{E}{\left({\widetilde{{\theta}_{t}}}\right)}}{{{\alpha}}_t}+\frac{2{{M}^{2}}}{{\alpha}_T}.
	\end{align}
\end{theorem}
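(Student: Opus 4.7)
The plan is to bound the per-round regret $m_t \triangleq \tfrac{1}{n}\sum_{i=1}^n [f_t(x_i^t) - f_t(x_*^t)]$, then multiply by $\rho^{T-t}$ and sum from $t=1$ to $T$. I would introduce the mean iterate $\overline{x^t}$ and the restricted optimum $\widetilde{x_*^t}$ to split
\begin{align*}
m_t = \tfrac{1}{n}\sum_{i}[f_t(x_i^t)-f_t(\overline{x^t})] + [f_t(\overline{x^t})-f_t(\widetilde{x_*^t})] + [f_t(\widetilde{x_*^t})-f_t(x_*^t)],
\end{align*}
and dispatch the three pieces separately. Since $\widetilde{x_*^t}$ minimizes over $X_\delta=(1-\delta/r)X$, a short convexity argument using Assumption 4 yields $f_t(\widetilde{x_*^t})-f_t(x_*^t)\le \frac{\delta}{r}L_1$, which produces the $\sum_t \rho^{T-t}\frac{\delta}{r}L_1$ summand.

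For the consensus piece, Lipschitz continuity gives $|f_t(x_i^t)-f_t(\overline{x^t})|\le L\|x_i^t-\overline{x^t}\|$ and I plug in Lemma 3. After weighting by $\rho^{T-t}$ and exchanging the order of summation, the geometric double-sum collapses via $\sum_{t\ge s}\rho^{T-t}\lambda^{t-s}\le \rho^{T-s}/(1-\lambda/\rho)$ because $1>\rho>\lambda$. This single manipulation simultaneously produces the initial-condition summand $2L(1+d)\gamma\sum_j\mathbb{E}\|x_j^1\|\sum_t\rho^{T-t}\lambda^{t-2}$, the one-step displacement summand $\frac{4L(1+d)}{n}\sum_t\rho^{T-t}\sum_i\mathbb{E}(F_i^t)$, and the $\frac{2L(1+d)\rho^{-2}}{1-\lambda/\rho}$ part of $\sigma$.

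For the central piece I pass to the $\delta$-smoothed losses $\widehat{f_{i,\delta}^t}$, absorbing the $|f_i^t-\widehat{f_{i,\delta}^t}|=O(\delta)$ gap into the $\frac{\delta}{r}L_1$ summand. Lemma 4 combined with $\mathbb{E}[u_i^t]=0$ yields the unbiasedness $\mathbb{E}[g_i^t\mid\mathcal{F}_t]=\nabla\widehat{f_{i,\delta}^t}(x_i^t)$ with $\|g_i^t\|\le dL$; convexity of $\widehat{f_{i,\delta}^t}$ together with Lemma 2 applied to $x_i^{t+1}=P_{X_\delta}(z_i^{t+1}-\alpha_t g_i^t)$ with $z=\widetilde{x_*^t}$ gives
\begin{align*}
2\alpha_t\langle g_i^t, x_i^{t+1}-\widetilde{x_*^t}\rangle \le \|\widetilde{x_*^t}-z_i^{t+1}\|^2 - \|\widetilde{x_*^t}-x_i^{t+1}\|^2 - \|\varepsilon_{i,t}^x\|^2.
\end{align*}
Inserting $x_i^t$ inside the inner product and applying Cauchy--Schwarz with $\|g_i^t\|\le dL$ contributes the $\frac{\alpha_t n}{2}\sigma^2$ term, whose residual consensus-error component I again collapse through the same $\rho^{T-t}\lambda^{t-s}$ identity, accounting for the remaining $(4+5d)L$ part of $\sigma$.

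The final step is a forgetting-factor telescope: using the doubly-stochastic identity $\overline{z^{t+1}}=\overline{x^t}$, I shift the reference in the squared distances from $\widetilde{x_*^t}$ to $\widetilde{x_*^{t+1}}$, which injects the tracking penalty $\sum_t \rho^{T-t}\frac{2M\mathbb{E}(\widetilde{\theta_t})}{\alpha_t}$, while the untelescoped boundary residual at $t=T$ is bounded by $2M^2/\alpha_T$. The main obstacle is carrying the weights $\rho^{T-t}/\alpha_t$ cleanly through this telescope alongside the time-varying optimum: the squared-distance cancellation must be arranged so that the only residuals left are the stated boundary and tracking terms, and the geometric identity must be reused without leaving uncontrolled cross-terms. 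Collecting all contributions and matching constants yields the claimed bound.
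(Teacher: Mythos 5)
Your proposal is correct and follows essentially the same route as the paper's proof: the same consensus/optimization/restriction split, the smoothed-loss unbiasedness $\mathbb{E}[g_i^t\mid x_i^t]=\nabla \overset{\wedge }{\mathop{f_{i,\delta }^{t}}}\,(x_i^t)$ with $\|g_i^t\|\le dL$, Lemma 2 at reference point $\widetilde{x_{*}^{t}}$, the geometric collapse under $\lambda<\rho$ producing the $\rho^{-2}/(1-\lambda/\rho)$ part of $\sigma$, absorption of the linear $\sigma\|\varepsilon_{i,t}^{x}\|$ terms into the negative quadratic from Lemma 2 (this step is Young's inequality, not Cauchy--Schwarz as you label it), and the weighted telescope yielding the tracking penalty $2M\sum_t \rho^{T-t}\mathbb{E}(\widetilde{\theta_t})/\alpha_t$ and boundary term $2M^2/\alpha_T$. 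The only cosmetic deviations are that the paper routes the decomposition through $\frac{1}{n}\sum_{i} f_i^t(x_i^t)$ rather than $f_t(\overline{x^t})$ (both yield the identical $\frac{2L}{n}\sum_i\|x_i^t-\overline{x^t}\|$ consensus term), and the reference-shift across $z$-iterates is justified by Jensen plus double stochasticity of $W$ rather than the averaged identity $\overline{z^{t+1}}=\overline{x^t}$ you cite.
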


\begin{proof}
	By the definition of ${R}_T^F$, we start with $\frac{1}{n}\sum\limits_{i=1}^{n}{{{f}_{t}}}\left( x_{i}^{t} \right)$,
	\vspace{-\abovedisplayskip}
	\begin{align}
		&\frac{1}{n}\sum\limits_{i=1}^{n}{{{f}_{t}}}\left( x_{i}^{t} \right)=\frac{1}{n}\sum\limits_{i=1}^{n}{\left[ \frac{1}{n}\sum\limits_{j=1}^{n}{f_{j}^{t}}\left( x_{i}^{t} \right) \right]}\nonumber\\
		&=\frac{1}{n}\sum\limits_{i=1}^{n}{f_{i}^{t}}\left( x_{i}^{t} \right)+\frac{1}{{{n}^{2}}}\sum\limits_{i=1}^{n}{\sum\limits_{j=1}^{n}{(f_{j}^{t}}}(x_{i}^{t})-f_{j}^{t}(x_{j}^{t}))\nonumber\\
		&\le \frac{1}{n}\sum\limits_{i=1}^{n}{f_{i}^{t}}\left( x_{i}^{t} \right)+\frac{1}{{{n}^{2}}}\sum\limits_{i=1}^{n}{\sum\limits_{j=1}^{n}{L\left\| x_{i}^{t}-x_{j}^{t} \right\|}}
		\nonumber\\
		&\le \frac{1}{n}\sum\limits_{i=1}^{n}{f_{i}^{t}}\left( x_{i}^{t} \right)+\frac{2L}{n}\sum\limits_{i=1}^{n}{\left\| x_{i}^{t}-\overline{{{x}^{t}}}\, \right\|}.
	\end{align}
	Thus, we can get an upper bound of the latter part of the regret as defined earlier in $(2)$.
	\begin{align}
		&\frac{1}{n}\underset{i=1}{\overset{n}{\mathop \sum }}\,\left[ {{f}_{t}}\left( x_{i}^{t} \right)-{{f}_{t}}\left( x_{*}^{t} \right) \right] \le \frac{1}{n}\sum\limits_{i=1}^{n}{f_{i}^{t}}\left( x_{i}^{t} \right)-\frac{1}{n}\sum\limits_{i=1}^{n}{f_{i}^{t}}\left( x_{*}^{t} \right)\nonumber\\
		&+\frac{2L}{n}\sum\limits_{i=1}^{n}{\left\| x_{i}^{t}-\overline{{{x}^{t}}}\, \right\|}\nonumber\\
		&= \frac{1}{n}\underset{i=1}{\overset{n}{\mathop \sum }}\,\left[ f_{i}^{t}\left( x_{i}^{t} \right)-f_{i}^{t}\left( x_{*}^{t} \right) \right]+\frac{2L}{n}\sum\limits_{i=1}^{n}{\left\| x_{i}^{t}-\overline{{{x}^{t}}}\, \right\|}.
	\end{align}
	Notice ${\widetilde{x_{*}^{t}}}\,=\arg {{\min }_{x_i^t\in {{X}_{\delta }}}}{{f}_{t}}(x)$ and we know that $0\in X$. Then by the convexity of $f_{i}^{t}(\cdot )$, we have
	\begin{align}
		{{f}_{t}}(\widetilde{x_{*}^{t}})
		&={{\min }_{x_{i}^{t}\in {{X}_{\delta }}}}{{f}_{t}}(x_{i}^{t})
		={{\min }_{x_{i}^{t}\in X}}{{f}_{t}}[(1-\frac{\delta }{r})x_{i}^{t}]	\nonumber\\
		&={{\min }_{x_{i}^{t}\in X}}{{f}_{t}}[0\cdot \frac{\delta }{r}+(1-\frac{\delta }{r})x_{i}^{t}]\nonumber\\
		&\le {{\min }_{x_{i}^{t}\in X}}\left\{ \frac{\delta }{r}{{f}_{t}}(0)+(1-\frac{\delta }{r}){{f}_{t}}(x_{i}^{t}) \right\}.
	\end{align}
	Due to Assumption 4 and ${{\sup }_{1\le t\le T,x_i^t\in X}}\left| f_{i}^{t}\left( x_{i}^{t} \right) \right|\le {{L}_{1}}$, we get
	\begin{equation}
		{{f}_t}(\widetilde{x_{*}^{t}})\le \frac{\delta }{r}{{L}_{1}}+f_{t}(x_{*}^{t}).
	\end{equation}
	Therefore, we can get 
	\begin{equation}
		\sum\limits_{\text{i}=1}^{n}{\left[ f_{i}^{t}\left(\widetilde{x_{*}^{t}} \right)-f_{i}^{t}\left( x_{*}^{t} \right) \right]}\le \frac{n\delta }{r}{{L}_{1}}.
	\end{equation}	
	By Assumption 1, for any $x_{i}^{t}\in {{X}_{\delta }}$, we have the following estimate
	\begin{align}
		\left| \overset{\wedge }{\mathop{f_{i,\delta }^{t}}}\,(x_{i}^{t})-f_{i}^{t}(x_{i}^{t}) \right|
		&=\left|\mathbb{E}\left[ {{f}_{t}}(x_{i}^{t}+\delta v_i^t) \right]-{{f}_{t}}(x_{i}^{t}) \right|\nonumber\\
		&\le \mathbb{E}\left| f_{i}^{t}(x_{i}^{t}+\delta v_i^t)-f_{i}^{t}(x_{i}^{t}) \right|.
	\end{align}	
	Based on the convexity of $f_{i}^{t}$, we obtain
	\begin{align}
		\left| \overset{\wedge }{\mathop{f_{i,\delta }^{t}}}\,(x_{i}^{t})-f_{i}^{t}(x_{i}^{t}) \right|&\le \mathbb{E}\left[ L\delta \left\| v_i^t \right\| \right]\le L\delta.
	\end{align}
	Now we can convert the previous part $f_{i}^{t}(x_{i}^{t})-f_{i}^{t}(x_{*}^{t})$ as follows
	\begin{align}
		&\sum\limits_{i=1}^{n}{\left( f_{i}^{t}(x_{i}^{t})-f_{i}^{t}(x_{*}^{t}) \right)}\le \sum\limits_{i=1}^{n}{\left( \overset{\wedge }{\mathop{f_{i,\delta }^{t}}}\,\left( x_{i}^{t} \right)+L\delta  \right)}\nonumber\\
		&-\sum\limits_{i=1}^{n}{\left(\overset{\wedge }{f_{i,\delta }^{t}} ({\widetilde{x_{*}^{t}}}\,)+L\delta  \right)}+\frac{n\delta }{r}{{L}_{1}}\nonumber\\
		&=\sum\limits_{i=1}^{n}{\left( \overset{\wedge }{\mathop{f_{i,\delta }^{t}}}\,\left( x_{i}^{t} \right)-\overset{\wedge }{f_{i,\delta }^{t}}(\widetilde{x_{*}^{t}}) \right)}+\frac{n\delta }{r}{{L}_{1}}.
	\end{align}	
	Then, let's analyze of the upper bound of $\overset{\wedge }{\mathop{f_{i,\delta }^{t}}}\,\left( x_{i}^{t} \right)-\overset{\wedge }{f_{i,\delta }^{t}}(\widetilde{x_{*}^{t}})$. Note that
	\begin{equation}
		\overset{\wedge }{\mathop{f_{i,\delta }^{t}}}\,\left( x_{i}^{t} \right)-\overset{\wedge }{f_{i,\delta }^{t}}({\widetilde{x_{*}^{t}}}\,)\le \left\langle \nabla \overset{\wedge }{\mathop{f_{i,\delta }^{t}}}\,(x_{i}^{t}),x_{i}^{t}-{\widetilde{x_{*}^{t}}}\, \right\rangle
	\end{equation} 	
	and the conditional expectation $\mathbb{E}[g_{i}^{t}(x_{i}^{t})|x_{i}^{t}]=\nabla \overset{\wedge }{\mathop{f_{i,\delta }^{t}}}\,(x_{i}^{t})$, we have
	\begin{align}
		\mathbb{E}\left\langle g_{i}^{t},x_{i}^{t}-\widetilde{x_{*}^{t}} \right\rangle =\left\langle \nabla \overset{\wedge }{\mathop{f_{i,\delta }^{t}}}\,(x_{i}^{t}),x_{i}^{t}-{\widetilde{x_{*}^{t}}}\, \right\rangle,
	\end{align}
	from which and by taking the mathematical expectation to both sides of $(23)$, we have
	\begin{align}
		\mathbb{E}\left({\overset{\wedge }{\mathop{f_{i,\delta }^{t}}}\,\left( x_{i}^{t} \right)-\overset{\wedge }{f_{i,\delta }^{t}}({\widetilde{x_{*}^{t}}}\,)}\right) \le \mathbb{E}\left\langle g_{i}^{t},x_{i}^{t}-\widetilde{x_{*}^{t}} \right\rangle.
	\end{align}	
	We introduce $x_{i}^{t+1}$ in the follow equation:
	\begin{align}
		\mathbb{E}\left\langle g_{i}^{t},x_{i}^{t}-\widetilde{x_{*}^{t}} \right\rangle=\mathbb{E}\left\langle g_{i}^{t},x_{i}^{t}-x_{i}^{t+1} \right\rangle +\mathbb{E}\left\langle g_{i}^{t},x_{i}^{t+1}-\widetilde{x_{*}^{t}} \right\rangle .
	\end{align} 		
	By Assumption 1, $\left\| g_{i}^{t} \right\|\le \frac{d}{\delta }L\left\| \delta u_{i}^{t} \right\|=dL$, we can get
	\begin{equation}
		\mathbb{E}\left\langle g_{i}^{t},x_{i}^{t}-\widetilde{x_{*}^{t}} \right\rangle\le dL\mathbb{E}\left\| x_{i}^{t}-x_{i}^{t+1} \right\|+\mathbb{E}\left\langle g_i^t,x_{i}^{t+1}-{\widetilde{x_{*}^{t}}}\, \right\rangle.
	\end{equation} 		
	By Lemma 2, we get
	\begin{equation}
		\begin{aligned}
			\mathbb{E}\left\langle g_{i}^{t},x_{i}^{t+1}-{\widetilde{x_{*}^{t}}} \right\rangle\le \frac{1}{2{\alpha}_t}\mathbb{E}\left[ \left\| {\widetilde{x_{*}^{t}}} \right.-
			{{\left. z_{i}^{t+1} \right\|}^{2}}-\left\| {\widetilde{x_{*}^{t+1}}} \right.-{{\left. z_{i}^{t+2} \right\|}^{2}} 
			\right. \\  \left.-{{\left\| \varepsilon _{i,t}^{x} \right\|}^{2}}+\left\| {\widetilde{x_{*}^{t+1}}} \right.-{{\left. x_{i}^{t+1} \right\|}^{2}}-\left\| {\widetilde{x_{*}^{t}}} \right.-{{\left. x_{i}^{t+1} \right\|}^{2}}\text{ } \right].
		\end{aligned}
	\end{equation}	
	Here,
	\begin{align}
		&\mathbb{E}\left[ R_{T}^{F} \right]\le \frac{dL}{n}\sum\limits_{t=1}^{T}{{{\rho }^{T-t}}}\sum\limits_{i=1}^{n}\mathbb{E}{\left\| x_{i}^{t}-x_{i}^{t+1} \right\|}\nonumber\\&+\frac{2L}{n}\underset{t=1}{\overset{T}{\mathop \sum }}\,{{\rho }^{T-t}}\underset{i=1}{\overset{n}{\mathop \sum }}\,\mathbb{E}\parallel x_{i}^{t}-\overline{{{x}^{t}}}\parallel \nonumber\\
		&+\frac{1}{2n}\sum\limits_{t=1}^{T}{{{\rho }^{T-t}}}\frac{1}{{\alpha}_t}\sum\limits_{i=1}^{n}\mathbb{E}{\left( {{\left\| {\widetilde{x_{*}^{t}}}\,-z_{i}^{t+1} \right\|}^{2}}-{{\left\| {\widetilde{x_{*}^{t+1}}}\,-z_{i}^{t+2} \right\|}^{2}} \right)}\nonumber\\
		&+\frac{1}{2n}\sum\limits_{t=1}^{T}{{{\rho }^{T-t}}}\frac{1}{{\alpha}_t}\sum\limits_{i=1}^{n}\mathbb{E}{({{\left\| {\widetilde{x_{*}^{t+1}}}\,-x_{i}^{t+1} \right\|}^{2}}-{{\left\| {\widetilde{x_{*}^{t}}}\,-x_{i}^{t+1} \right\|}^{2}})}\nonumber\\
		&-\frac{1}{2n}\sum\limits_{t=1}^{T}{{{\rho }^{T-t}}\frac{1}{{\alpha}_t}\sum\limits_{i=1}^{n}\mathbb{E}{{{\left\| \varepsilon _{i,t}^{x} \right\|}^{2}}}}+\underset{t=1}{\overset{T}{\mathop \sum }}\,{{\rho }^{T-t}}\frac{\delta {{L}_{1}}}{r}.
	\end{align}	
	Now, let's analyze the upper bound of the first part in $(29)$
	\begin{align}
		\mathbb{E}\left\| x_{i}^{t} \right.-\left. x_{i}^{t+1} \right\|&=\mathbb{E}\left\| x_{i}^{t}-z_{i}^{t+1}+z_{i}^{t+1}-x_{i}^{t+1} \right\|\nonumber\\
		&\le \mathbb{E}\left\| x_{i}^{t}-z_{i}^{t+1} \right\|+\mathbb{E}\left\| \varepsilon _{i,t}^{x} \right\| \nonumber \\
		&\le \mathbb{E}\left\| x_{i}^{t}-\overline{{{x}^{t}}}\,  \right\|+\mathbb{E}\left\|\overline{{{x}^{t}}}\, -z_{i}^{t+1} \right\|+\mathbb{E}\left\| \varepsilon _{i,t}^{x} \right\|.
	\end{align}	
	Then, through (11), we can obtain
	\begin{align}
		&\sum\limits_{i=1}^{n}\mathbb{E}{\left\| x_{i}^{t} \right.-\left. x_{i}^{t+1} \right\|}
		\le \sum\limits_{i=1}^{n}\mathbb{E}{\left\| x_{i}^{t}-\overline{{{x}^{t}}}\,  \right\|}\nonumber\\
		&+\sum\limits_{i=1}^{n}\mathbb{E}{\left\| \overline{{{x}^{t}}}\, -\sum\limits_{j=1}^{n}{{{w}_{ij}}x_{j}^{t}} \right\|}
		+\sum\limits_{i=1}^{n}\mathbb{E}{\left\| \varepsilon _{i,t}^{x} \right\|}.
	\end{align}	
	Note that $\left\| \cdot  \right\|$ is convex and Assumption 3 $(b)$, we can get
	\begin{align}
		&\sum\limits_{i=1}^{n}\mathbb{E}{\left\| x_{i}^{t} \right.-\left. x_{i}^{t+1} \right\|}\le \sum\limits_{i=1}^{n}\mathbb{E}{\left\| x_{i}^{t}-\overline{{{x}^{t}}}\,  \right\|}\nonumber\\&+\sum\limits_{i=1}^{n}{\sum\limits_{j=1}^{n}{{{w}_{ij}}}\mathbb{E}\left\| \overline{{{x}^{t}}}\, -x_{j}^{t} \right\|}+\sum\limits_{i=1}^{n}\mathbb{E}{\left\| \varepsilon _{i,t}^{x} \right\|}\nonumber \\
		&=2\sum\limits_{i=1}^{n}\mathbb{E}{\left\| x_{i}^{t}-\overline{{{x}^{t}}}\,  \right\|}+\sum\limits_{i=1}^{n}\mathbb{E}{\left\| \varepsilon _{i,t}^{x} \right\|}.
	\end{align}	
	Then, by Lemma 3, we can obtain
	\begin{align}
		&\sum\limits_{t=1}^{T}{{{\rho }^{T-t}}}\sum\limits_{i=1}^{n}{\gamma {{\lambda }^{t-2}}}\sum\limits_{j=1}^{n}\mathbb{E}{\left\| x_{j}^{1} \right\|}\nonumber\\&=\sum\limits_{i=1}^{n}{\gamma }\sum\limits_{j=1}^{n}\mathbb{E}{\left\| x_{j}^{1} \right\|}\sum\limits_{t=1}^{T}{{{\rho }^{T-t}}}{{\lambda }^{t-2}}.
	\end{align}	
	For the second term of $x_{i}^{t}-\overline{{x}^{t}}$,
	\begin{align}
		\sum\limits_{t=1}^{T}{{{\rho }^{T-t}}}\sum\limits_{i=1}^{n}{\frac{1}{n}\sum\limits_{j=1}^{n}\mathbb{E}{\left\| \varepsilon _{j,t-1}^{x} \right\|}}&=\sum\limits_{t=1}^{T}{{{\rho }^{T-t}}}\sum\limits_{j=1}^{n}\mathbb{E}{\left\| \varepsilon _{j,t-1}^{x} \right\|}\nonumber\\
		&=\sum\limits_{t=1}^{T}{{{\rho }^{T-t}}}\sum\limits_{i=1}^{n}\mathbb{E}{\left\| \varepsilon _{i,t-1}^{x} \right\|}.
	\end{align}	
	Note that ${{F}_{i}^{t}}\triangleq \left| \left\| \varepsilon _{i,t}^{x} \right\|-\left\| \varepsilon _{i,t-1}^{x} \right\| \right|$, we have
	\begin{align}
		&\sum\limits_{t=1}^{T}{{{\rho }^{T-t}}}\sum\limits_{i=1}^{n}\mathbb{E}{\left\| \varepsilon _{i,t-1}^{x} \right\|}\le \sum\limits_{t=1}^{T}{{{\rho }^{T-t}}}\sum\limits_{i=1}^{n}\mathbb{E}{\left( \left\| \varepsilon _{i,t}^{x} \right\|+{{F}_{i}^{t}} \right)}\nonumber\\
		&=\sum\limits_{t=1}^{T}{{{\rho }^{T-t}}}\sum\limits_{i=1}^{n}\mathbb{E}{\left\| \varepsilon _{i,t}^{x} \right\|+}\sum\limits_{t=1}^{T}{{{\rho }^{T-t}}}\sum\limits_{i=1}^{n}\mathbb{E}{\left({{F}_{i}^{t}}\right)}.\
	\end{align}	
	Now, for the last term of (5), 
	\begin{align}
		&\sum\limits_{t=1}^{T}{{{\rho }^{T-t}}}\sum\limits_{s=1}^{t-2}{{{\lambda }^{t-s-2}}}\sum\limits_{j=1}^{n}\mathbb{E}{\left\| \varepsilon _{j,s}^{x} \right\|}\nonumber\\&=\sum\limits_{s=1}^{T-2}{\sum\limits_{t=s+2}^{T}{{{\rho }^{T-t}}}}{{\lambda }^{t-s-2}}\sum\limits_{j=1}^{n}\mathbb{E}{\left\| \varepsilon _{j,s}^{x} \right\|}\nonumber\\
		&=\sum\limits_{s=1}^{T-2}{{{\rho }^{T-t-2}}}\sum\limits_{t=s+2}^{T}{{{\rho }^{-\left( t-s-2 \right)}}}{{\lambda }^{t-s-2}}\sum\limits_{j=1}^{n}\mathbb{E}{\left\| \varepsilon _{j,s}^{x} \right\|}\nonumber
	\end{align}	
	\begin{align}
		&=\sum\limits_{s=1}^{T-2}{{{\rho }^{T-t-2}}}\sum\limits_{v=0}^{T-s-2}{{{\left( \frac{\lambda }{\rho } \right)}^{v}}}\sum\limits_{j=1}^{n}\mathbb{E}{\left\| \varepsilon _{j,s}^{x} \right\|}.
	\end{align}	
	Owing to $\lambda <\rho $, we get
	\begin{align}
		&\sum\limits_{t=1}^{T}{{{\rho }^{T-t}}}\sum\limits_{s=1}^{t-2}{{{\lambda }^{t-s-2}}}\sum\limits_{j=1}^{n}\mathbb{E}{\left\| \varepsilon _{j,s}^{x} \right\|}\nonumber\\&\le \frac{1}{1-\frac{\lambda }{\rho }}\sum\limits_{s=1}^{T-2}{{{\rho }^{T-s-2}}}\sum\limits_{j=1}^{n}\mathbb{E}{\left\| \varepsilon _{j,s}^{x} \right\|}\nonumber\\
		&=\frac{{{\rho }^{-2}}}{1-\frac{\lambda }{\rho }}\sum\limits_{t=1}^{T-2}{{{\rho }^{T-t}}}\sum\limits_{j=1}^{n}\mathbb{E}{\left\| \varepsilon _{j,t}^{x} \right\|}\nonumber\\
		&\le \frac{{{\rho }^{-2}}}{1-\frac{\lambda }{\rho }}\sum\limits_{\text{t}=1}^{T}{{{\rho }^{T-t}}}\sum\limits_{i=1}^{n}\mathbb{E}{\left\| \varepsilon _{i,t}^{x} \right\|}.
	\end{align}
	Therefore, through $(33)-(37)$, we can analyze that the DFFR in the first two parts of $(29)$ is as follows
	\begin{align}
		&\frac{dL}{n}\sum\limits_{t=1}^{T}{{{\rho }^{T-t}}}\sum\limits_{i=1}^{n}\mathbb{E}{\left\| x_{i}^{t}-x_{i}^{t+1} \right\|}\nonumber\\
		&+\frac{2L}{n}\underset{t=1}{\mathop{\overset{T}{\mathop{\sum }}\,}}\,{{\rho }^{T-t}}\underset{i=1}{\mathop{\overset{n}{\mathop{\sum }}\,}}\,\mathbb{E}\left\| x_{i}^{t}-\overline{{{x}^{t}}} \right\|\nonumber\\
		&\le 2L(1+d)\gamma \underset{j=1}{\mathop{\overset{n}{\mathop{\sum }}\,}}\,\mathbb{E}\parallel x_{j}^{1}\parallel \underset{t=1}{\mathop{\overset{T}{\mathop{\sum }}\,}}\,{{\rho }^{T-t}}{{\lambda }^{t-2}}\nonumber\\&+\frac{4L(1+d)}{n}\underset{t=1}{\mathop{\overset{T}{\mathop{\sum }}\,}}\,{{\rho }^{T-t}}\sum\limits_{i=1}^{n}\mathbb{E}{\left({F}_{i}^{t}\right)}\nonumber\\
		&+(\frac{(4+5d)L}{n}+\frac{\frac{(2L(1+d))}{n}{{\rho }^{-2}}}{1-{}^{\lambda }/{}_{\rho }}\underset{t=1}{\overset{T}{\mathop )\sum }}\,{{\rho }^{T-t}}\sum\limits_{i=1}^{n}\mathbb{E}{\left\| \varepsilon _{i,t}^{x} \right\|}.
	\end{align}	
	Next, we will analyze the upper bound of the DFFR in third  part of $(29)$ 
	\begin{equation}
		\sum\limits_{t=1}^{T}{{{\rho }^{T-t}}}\frac{1}{2{n{a}_{t}}}\sum\limits_{i=1}^{n}\mathbb{E}{\left( {{\left\| {\widetilde{x_{*}^{t}}}\,-z_{i}^{t+1} \right\|}^{2}}-{{\left\| {\widetilde{x_{*}^{t+1}}}\,-z_{i}^{t+2} \right\|}^{2}} \right)}\nonumber
	\end{equation}
	\begin{equation}
		\begin{aligned}
			\le \frac{1}{n}\sum\limits_{i=1}^{n}{\left( \frac{{{\rho }^{T-1}}}{2{{a}_{1}}}\mathbb{E}{{\left\| {\widetilde{x_{*}^{1}}}\,-z_{i}^{2} \right\|}^{2}}-\frac{{{\rho }^{0}}}{2{{a}_{T}}}\mathbb{E}{{\left\| {\widetilde{x_{*}^{T+1}}}\,-z_{i}^{T+2} \right\|}^{2}}\right. }\\{\left.+\sum\limits_{t=0}^{T-2}{\left( \frac{{{\rho }^{t}}}{2{{a}_{T-t}}}-\frac{{{\rho }^{t+1}}}{2{{a}_{T-t-1}}} \right)}4{{M}^{2}} \right)}	\nonumber\\
		\end{aligned}
	\end{equation}
	\begin{align}
		&\le \frac{1}{n}\sum\limits_{i=1}^{n}{\left( \frac{{{\rho }^{T-1}}}{2{{a}_{1}}}\mathbb{E}{{\left\| {\widetilde{x_{*}^{1}}}\,-z_{i}^{2} \right\|}^{2}}+\left( \frac{{{\rho }^{0}}}{2{{a}_{T}}}-\frac{{{\rho }^{T-1}}}{2{{a}_{1}}} \right)4{{M}^{2}} \right)}\nonumber\\
		&\le \frac{2{M}^{2}}{{{a}_{T}}}.
	\end{align}
	Next, we will analyze the DFFR in the fourth  part of $(29)$. Noting that ${\widetilde{{\theta}_{t}}}\,\triangleq \left\| {\widetilde{x_{*}^{t}}}\,-{\widetilde{x_{*}^{t+1}}}\, \right\|$,
	\begin{align}
		&\mathbb{E}{{\left\| {\widetilde{x_{*}^{t+1}}}\,-x_{i}^{t+1} \right\|}^{2}}-\mathbb{E}{{\left\| {\widetilde{x_{*}^{t}}}\,-x_{i}^{t+1} \right\|}^{2}}\nonumber\\
		&\le \mathbb{E}\left\| {\widetilde{x_{*}^{t+1}}}\,-{\widetilde{x_{*}^{t}}}\, \right\|\cdot \mathbb{E} \left\| {\widetilde{x_{*}^{t+1}}}\,-x_{i}^{t+1}+{\widetilde{x_{*}^{t}}}\,-x_{i}^{t+1} \right\|\nonumber\\&\le 4M\mathbb{E}{\left({\widetilde{{\theta}_{t}}}\,\right)}.
	\end{align}	
	Therefore, we can obtain
	\begin{align}
		&\frac{1}{2{\alpha}_tn}\sum\limits_{t=1}^{T}{{{\rho }^{T-t}}}\sum\limits_{i=1}^{n}\mathbb{E}{({{\left\|{\widetilde{x_{*}^{t+1}}}\,-x_{i}^{t+1} \right\|}^{2}}-{{\left\| {\widetilde{x_{*}^{t}}}\,-x_{i}^{t+1} \right\|}^{2}})}\nonumber\\&\le 2M\sum\limits_{t=1}^{T}{{{\rho }^{T-t}}}\frac{\mathbb{E}{\left({\widetilde{{\theta}_{t}}}\right)}}{{{\alpha}}_t}.
	\end{align}
	Now, let's analyze the DFFR in the fifth part of $(29)$.
	Denote $\sigma \triangleq (4+5d)L+\frac{{2L(1+d){\rho }^{-2}}}{1-\frac{\lambda }{\rho }}$, we have
	\begin{equation}
		\sigma \mathbb{E}\left\| \varepsilon _{i,t}^{x} \right\|\le \frac{\mathbb{E}{{\left\| \varepsilon _{i,t}^{x} \right\|}^{2}}}{2{\alpha}_tn}+\frac{{\alpha}_tn}{2}{{\sigma }^{2}}.
	\end{equation}	
	Therefore, we can obtain
	\begin{align}
		&\underset{t=1}{\mathop{\overset{T}{\mathop{\sum }}\,}}\,{{\rho }^{T-t}}\underset{i=1}{\overset{n}{\mathop \sum }}\,\sigma \mathbb{E}\parallel \varepsilon _{i,t}^{x}\parallel -\frac{1}{2{{\alpha }_{t}}n}\underset{t=1}{\overset{T}{\mathop \sum }}\,{{\rho }^{T-t}}\underset{i=1}{\overset{n}{\mathop \sum }}\,\mathbb{E}\parallel \varepsilon _{i,t}^{x}{{\parallel }^{2}}\nonumber\\&\le \sum\limits_{t=1}^{T}{\frac{{\alpha}_t n}{2}{{\sigma }^{2}}{{\rho }^{T-t}}}.
	\end{align}
	From $(38), (39), (41)$ and $(43)$, we can get $(14)$. This proof is completed.\par
\end{proof}

\textit{Remark 5.} Theorem 1 gives the upper bound of the DFFR of Distributed Online  Gradient-free Algorithm. Note that the upper bound of $R_T^F$ is related to ${\delta}$, because we use $(9)$ as an estimate for $\nabla \overset{\wedge }{\mathop{f_{i,\delta }^{t}}}\,\left( x_{i}^{t} \right)$ , instead of the gradient of the local optimization function $\mathrm{\nabla}f_i^t\left(x_i^t\right)$ itself. \par
\noindent\textbf{Corollary 1.} Setting the step size ${\alpha}$ is a positive constant, we will further draw the following conclusion.
\begin{align}
	\mathbb{E}\left[ R_{T}^{F} \right]&\le 2L(1+d)\gamma 	\underset{j=1}{\overset{n}{\mathop \sum }}\,\mathbb{E}\parallel x_{j}^{1}\parallel \underset{t=1}{\overset{T}{\mathop \sum }}\,{{\rho }^{T-t}}{{\lambda }^{t-2}}\nonumber\\
	&+\frac{4L(1+d)}{n}\underset{t=1}{\overset{T}{\mathop \sum }}\,{{\rho }^{T-t}}\sum\limits_{i=1}^{n}\mathbb{E}{\left({{F}_{i}^{t}}\right)}+{\frac{{\sigma }^{2}{\alpha}n}{2}}\sum\limits_{t=1}^{T}{{\rho }^{T-t}}\nonumber\\
	&+{\frac{\delta }{r}{{L}_{1}}}\sum\limits_{t=1}^{T}{{\rho }^{T-t}}+\frac{2M}{\alpha}\sum\limits_{t=1}^{T}{{{\rho }^{T-t}}}\mathbb{E}{\left({\widetilde{{\theta}_{t}}}\right)}+\frac{2{{M}^{2}}}{{\alpha}}.
\end{align}

\begin{proof} Consider the upper bound of the DFFR in third  part of $(29)$, 
	\begin{align}
		&\sum\limits_{t=1}^{T}{{{\rho }^{T-t}}}\frac{1}{2{n{\alpha}}}\sum\limits_{i=1}^{n}\mathbb{E}{\left( {{\left\| {\widetilde{x_{*}^{t}}}\,-z_{i}^{t+1} \right\|}^{2}}-{{\left\| {\widetilde{x_{*}^{t+1}}}\,-z_{i}^{t+2} \right\|}^{2}} \right)}\nonumber
	\end{align}
		\begin{equation}
		\begin{aligned}
		\le \frac{1}{2\alpha n}\sum\limits_{i=1}^{n}{\left( {{\rho }^{T-1}}\mathbb{E}{{\left\| {\widetilde{x_{*}^{1}}}\,-z_{i}^{2} \right\|}^{2}}-\mathbb{E}{{\left\| {\widetilde{x_{*}^{T+1}}}\,-z_{i}^{T+2} \right\|}^{2}}\right. }\\{\left.+\sum\limits_{t=0}^{T-2}{\left( {{\rho }^{t}}-{{\rho }^{t+1}}\right)}4{{M}^{2}} \right)}	\nonumber\\
		\end{aligned}
		\end{equation}
	\begin{align}
		&\le \frac{1}{2\alpha n}\sum\limits_{i=1}^{n}{\left({{\rho }^{T-1}}\mathbb{E}{{\left\| {\widetilde{x_{*}^{1}}}\,-z_{i}^{2} \right\|}^{2}}+\left( {{\rho }^{0}}-{{\rho }^{T-1}} \right)4{{M}^{2}} \right)}\nonumber\\
		&\le \frac{2{M}^{2}}{\alpha}.
	\end{align}
	For the DFFR in the fourth  part of $(29)$, we have
	\begin{align}
		&\frac{1}{2{\alpha}n}\sum\limits_{t=1}^{T}{{{\rho }^{T-t}}}\sum\limits_{i=1}^{n}\mathbb{E}{({{\left\|{\widetilde{x_{*}^{t+1}}}\,-x_{i}^{t+1} \right\|}^{2}}-{{\left\| {\widetilde{x_{*}^{t}}}\,-x_{i}^{t+1} \right\|}^{2}})}\nonumber\\
		&\le \frac{2M}{\alpha}\sum\limits_{t=1}^{T}{{{\rho }^{T-t}}}\mathbb{E}{\left({\widetilde{{\theta}_{t}}}\right)}.
	\end{align}
	For the last part of DFFR, we use the following inequality:
	\begin{equation}
		\sigma \mathbb{E}\left\| \varepsilon _{i,t}^{x} \right\|\le \frac{\mathbb{E}{{\left\| \varepsilon _{i,t}^{x} \right\|}^{2}}}{2{\alpha} n}+\frac{{\alpha}n}{2}{{\sigma }^{2}}.
	\end{equation}
	Then, we have 
	\begin{align}
		&\sum\limits_{t=1}^{T}{{{\rho }^{T-t}}}\underset{i=1}{\overset{n}{\mathop \sum }}\,\sigma \mathbb{E}\parallel \varepsilon _{i,t}^{x}\parallel -\frac{1}{2\alpha n}\underset{t=1}{\overset{T}{\mathop \sum }}\,{{\rho }^{T-t}}\underset{i=1}{\overset{n}{\mathop \sum }}\,\mathbb{E}\parallel \varepsilon _{i,t}^{x}{{\parallel }^{2}}\nonumber\\&\le \sum\limits_{t=1}^{T}{\frac{{\alpha}n}{2}{{\sigma }^{2}}{{\rho }^{T-t}}}.
	\end{align}
	The rest proof is the same as Theorem 1. Through (38),(45), (46) and (48), we can obtain (44).
\end{proof}
\textit{Remark 6.} If $\sum\limits_{i=1}^{n}\mathbb{E}{\left({{F}_{i}^{t}}\right)}=o(1)$ and $\mathbb{E}{\left({\widetilde{{\theta}_{t}}}\right)}=o(1)$ as $t\to \infty $, then we have ${{\lim }_{T\to \infty }}\mathbb{E}\left[ R_{T}^{F} \right]\le \frac{\alpha {{\sigma }^{2}n}}{2\left( 1-\rho  \right)}+\frac{\delta L_1}{r\left( 1-\rho  \right)}+\frac{2{{M}^{2}}}{\alpha }$, which leads to ${{\lim }_{T\to \infty }}\mathbb{E}\left[ R_{T}^{F} \right]$ is a  bounded quantity.  Comparing with the Algorithm proposed in \cite{m}, we obtain that under the bandit feedback, the upper bound of $\underset{i=1}{\overset{n}{\mathop \sum }}\,\left[ {{f}_{T}}\left( x_{i}^{T} \right)-{{f}_{T}}\left( x_{*}^{T} \right) \right]$ as $T \to \infty$ will be larger, and the larger part is related to the introduction of $\delta$ due to unknown gradient information.

\section{DFFR of Distributed Online Projection-free Algorithm}
Projection-based algorithms simplify data and reduce noise, but they can also lead to information loss, particularly with high-dimensional parameters and large datasets. In contrast, projection-free algorithms avoid this complexity, simplifying updates and reducing storage and communication costs, making them more efficient for high-dimensional problems.  The Frank-Wolfe algorithm is a prominent example \cite{wenxian1,wenxian2,wenxian3,wenxian4}. Inspired by them, we propose a distributed online projection-free  algorithm and analyze its DFFR in this section. Below is the specific algorithm:
\hspace*{\fill} \\
{\noindent}	 \rule[0pt]{9cm}{0.1em}\\
\textbf{Algorithm 2}: Distributed Online Projection-free Algorithm\\
{\noindent}	 \rule[5pt]{9cm}{0.05em}\\
\textbf{Initialize}: $x_{i}^{1}\in {{X}_{\delta }}$ for any $i\in \left[ n \right]$.\\
\textbf{For} $t = 1,...,T$ \par
\textbf{for} $i = 1,...,n$ \par update the estimate as
\begin{align}
	&v_{i}^{t}=\arg {{\min }_{{{v}^{t}}\in X}}\left\langle \nabla f_{i}^{t}(x_{i}^{t}),{{v}^{t}} \right\rangle,\\
	&z_i^{t+1}=\sum_{j=1}^{n}{w_{ij}x_j^t},\\
	&\alpha _{i}^{t}=\arg {{\min }_{\alpha \in \left[ 0,1 \right]}}f_{i}^{t}(z_{i}^{t+1}+\alpha (v_{i}^{t}-x_{i}^{t})),\\
	&x_{i}^{t+1}=z_{i}^{t+1}+\alpha _{i}^{t}(v_{i}^{t}-x_{i}^{t}),
\end{align}\par
\textbf{end for}\\
\textbf{end for}\\
\textbf{Output: {$\{x_i^t\}$}}.\\
{\noindent}	 \rule[5pt]{9cm}{0.05em}

\begin{theorem}
	Suppose  Assumptions 1, 2 and 3 hold. Let ${{\nu }^{\text{t}}}=\frac{1}{n}\sum\limits_{i=1}^{n}{\left\| x_{i}^{t}-x_{*}^{t} \right\|}$ to measure the average distance between the decisions of agents and the best decision. Then for any $1 > \rho >\lambda > 0$,
	\begin{align}
		R_{T}^{F}&\le {L}\sum\limits_{t=1}^{T}{{{\rho }^{T-t}}}{\nu}_t+\sum\limits_{t=1}^{T}{{{\rho }^{T-t}}}(2L{{\alpha }_{0}}M + 2{{L}_{\text{s}}}{{\alpha }_{0}}^{2}{{M}^{2}})\nonumber\\&+8L\underset{t=1}{\mathop{\overset{T}{\mathop{\sum }}\,}}\,{{\rho }^{T-t}}{{F}_{i}^{t}}+ 4L\gamma \underset{j=1}{\mathop{\overset{n}{\mathop{\sum }}\,}}\,\parallel x_{j}^{1}\parallel \underset{t=1}{\mathop{\overset{T}{\mathop{\sum }}\,}}\,{{\rho }^{T-t}}{{\lambda }^{t-2}}\nonumber\\
		&+(\frac{9L}{n}+\frac{\frac{4L}{n}{{\rho }^{-2}}}{1-{}^{\lambda }/{}_{\rho }}\underset{t=1}{\overset{T}{\mathop )\sum }}\,{{\rho }^{T-t}}\sum\limits_{i=1}^{n}{\left\| \varepsilon _{i,t}^{x} \right\|}.		
	\end{align}
	If ${{\nu }^{\text{t}}}=o(1)$, $\sum\limits_{i=1}^{n}{F}_{i}^{t}=o(1)$ and  $\sum\limits_{i=1}^{n}{\left\| \varepsilon _{i,t}^{x} \right\|}=o(1)$ as $t\to \infty $, then ${{\lim }_{T\to \infty }} R_{T}^{F}=0$. This achieves $\underset{i=1}{\mathop{\overset{n}{\mathop{\sum }}\,}}\,\left( {{f}_{T}}\left( x_{i}^{T} \right)-{{f}_{T}}\left( x_{*}^{T} \right) \right) = 0$ as $T \to \infty$.
\end{theorem}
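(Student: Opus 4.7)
The plan is to mirror the proof of Theorem~1, replacing the gradient/projection step by the Frank--Wolfe linear-oracle and smoothness machinery. I would begin by applying the same $L$-Lipschitz decomposition used in (15)--(16) of the previous proof to obtain
\[
\frac{1}{n}\sum_{i=1}^{n}\bigl[f_t(x_i^t)-f_t(x_*^t)\bigr] \le \frac{1}{n}\sum_{i=1}^{n}\bigl[f_i^t(x_i^t)-f_i^t(x_*^t)\bigr] + \frac{2L}{n}\sum_{i=1}^{n}\|x_i^t-\overline{x^t}\|.
\]
For the local gap, the $L$-Lipschitz bound $|f_i^t(x_i^t)-f_i^t(x_*^t)| \le L\|x_i^t-x_*^t\|$ averages to $L\nu_t$; weighted by $\rho^{T-t}$ this furnishes the leading $L\sum_t \rho^{T-t}\nu_t$ contribution.

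Next I would exploit the projection-free update $x_i^{t+1}-z_i^{t+1}=\alpha_i^t(v_i^t-x_i^t)$. Combined with $\|v_i^t-x_i^t\|\le 2M$ (Assumption~2) and $\alpha_i^t\le\alpha_0$ (with $\alpha_0\le 1$ an ambient bound on the line-search step), this yields $\|x_i^{t+1}-z_i^{t+1}\|\le 2\alpha_0 M$. The $L_s$-smoothness part of Assumption~1, paired with $\|\nabla f_i^t\|\le L$, then gives
\[
f_i^t(x_i^{t+1}) - f_i^t(z_i^{t+1}) \le L\cdot 2\alpha_0 M + \tfrac{L_s}{2}(2\alpha_0 M)^2 = 2L\alpha_0 M + 2L_s\alpha_0^2 M^2,
\]
which, once inserted into the telescoping of the local objective gap along the update $x_i^t\to z_i^{t+1}\to x_i^{t+1}$ and weighted by $\rho^{T-t}$, produces the second term of the claimed bound. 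The linear-oracle optimality $\langle \nabla f_i^t(x_i^t), v_i^t - x_*^t\rangle\le 0$ and convexity are what ensure the Frank--Wolfe direction does not add a harmful first-order contribution here.

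For the consensus piece $\frac{2L}{n}\sum_i\|x_i^t-\overline{x^t}\|$, I would invoke Lemma~3, which decomposes the disagreement into an initial-condition term $\gamma\lambda^{t-2}\sum_j\|x_j^1\|$, instantaneous terms involving $\|\varepsilon_{i,t-1}^x\|$, and a running sum $\gamma\sum_{s=1}^{t-2}\lambda^{t-s-2}\sum_j\|\varepsilon_{j,s}^x\|$. Multiplying by $\rho^{T-t}$ and reordering the sums exactly as in (33)--(37) of Theorem~1's proof, the identity $\sum_{t=s+2}^{T}\rho^{T-t}\lambda^{t-s-2}\le \rho^{T-s-2}/(1-\lambda/\rho)$ (which uses $\lambda<\rho$) collapses the inner geometric series, and the pointwise bound $\|\varepsilon_{i,t-1}^x\|\le\|\varepsilon_{i,t}^x\|+F_i^t$ re-indexes the $\varepsilon$ sums. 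Collecting terms delivers the $8L\sum_t\rho^{T-t}F_i^t$ and $4L\gamma\sum_j\|x_j^1\|\sum_t\rho^{T-t}\lambda^{t-2}$ pieces, together with the combined coefficient $\tfrac{9L}{n}+\tfrac{(4L/n)\rho^{-2}}{1-\lambda/\rho}$ on $\sum_i\|\varepsilon_{i,t}^x\|$.

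Finally, for the asymptotic claim I would apply Lemma~4 termwise. Under $\nu_t=o(1)$, $\sum_i F_i^t=o(1)$, and $\sum_i\|\varepsilon_{i,t}^x\|=o(1)$ as $t\to\infty$, each $\rho$-weighted running sum vanishes, and the mixed geometric $\sum_t\rho^{T-t}\lambda^{t-2}$ also tends to zero because $\lambda<\rho$. I expect the main obstacle to be the term $\sum_t\rho^{T-t}(2L\alpha_0 M+2L_s\alpha_0^2 M^2)$, whose summand is a fixed positive constant: to force it to zero one has to argue that along the trajectory the line-search step $\alpha_i^t$ itself decays once the gap $\nu_t$ becomes small (since the one-dimensional minimizer shrinks as the iterate approaches the optimum), so that $\alpha_0$ may effectively be replaced by a vanishing sequence. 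Rigorously coupling this Frank--Wolfe line-search bound to the decaying optimization gap is where I expect the real subtlety of the argument to lie.
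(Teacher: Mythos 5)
There is a genuine gap, and it sits exactly where you flag the ``real subtlety.'' You treat the term $\sum_{t=1}^T \rho^{T-t}\,(2L\alpha_0 M + 2L_s\alpha_0^2 M^2)$ as an obstacle that must be overcome by showing the line-search step $\alpha_i^t$ itself decays along the trajectory once $\nu_t$ is small. That is not the paper's mechanism, and it is neither needed nor plausibly provable: $\alpha_i^t$ never appears in the bound at all. The whole point of the exact line search in (51) is that $f_i^t(x_i^{t+1}) = \min_{\alpha\in[0,1]} f_i^t\bigl(z_i^{t+1}+\alpha(v_i^t-x_i^t)\bigr) \le f_i^t\bigl(z_i^{t+1}+\alpha_0(v_i^t-x_i^t)\bigr)$ holds \emph{simultaneously for every fixed} $\alpha_0\in(0,1)$, so the stated regret bound is a one-parameter family of valid bounds. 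Under your hypotheses, Lemma 4 makes every other term vanish as $T\to\infty$, giving $\limsup_{T\to\infty} R_T^F \le (2L\alpha_0 M + 2L_s\alpha_0^2 M^2)/(1-\rho)$ for \emph{each} $\alpha_0\in(0,1)$; since $R_T^F\ge 0$, letting $\alpha_0\downarrow 0$ \emph{after} the limit in $T$ yields $\lim_{T\to\infty}R_T^F=0$ (this is precisely Remark 7). Your proposed coupling of the Frank--Wolfe line search to the decaying optimization gap is a wrong approach to a problem that the quantifier order already solves.

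A secondary inconsistency: your decomposition does not actually produce the constants you claim to collect. You bound the entire local gap by $L\nu_t$ via Lipschitzness at the outset, whereas the paper obtains $L\nu_t$ from $f_i^t(z_i^{t+1})-f_i^t(x_*^t)\le L\|z_i^{t+1}-x_*^t\|$ together with double stochasticity (58)--(59), while \emph{retaining} the residual term $L\|x_i^t-x_i^{t+1}\|$; it is $\sum_i\|x_i^t-x_i^{t+1}\|\le 2\sum_i\|x_i^t-\overline{x^t}\|+\sum_i\|\varepsilon_{i,t}^x\|$, fed into Lemma 3 alongside the $\tfrac{2L}{n}$ consensus term, that generates the coefficients $4L\gamma$, $\tfrac{8L}{n}$ and $\tfrac{9L}{n}$. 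With only the $\tfrac{2L}{n}\sum_i\|x_i^t-\overline{x^t}\|$ piece, your bookkeeping would give $2L\gamma$ and smaller $\varepsilon$-coefficients (harmless a fortiori, but ``collecting terms delivers'' the stated constants does not follow from your own decomposition). Similarly, ``inserting the smoothness bound into the telescoping of the local gap'' after that gap has already been fully disposed of by $L\nu_t$ is incoherent: within time step $t$ there is nothing left to telescope, and across time steps the functions $f_i^t$ change. Ironically, had you pursued your direct Lipschitz bound consistently, the constant $\alpha_0$ term would never enter the bound and the asymptotic claim would follow immediately from Lemma 4 --- the obstacle you labor over is self-inflicted.
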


\begin{proof}
	Following the previous approach, let's start analyzing from $f_{i}^{t}(x_{i}^{t})-f_{i}^{t}(x_{*}^{t})$, where
	\begin{align}
		&f_i^t(x_i^t)-f_{i}^{t}(x_{*}^{t})=f_i^t(x_i^{t+1})-f_{i}^{t}(x_{*}^{t})+f_i^t(x_i^t)-f_i^t(x_i^{t+1}).
	\end{align}	
	According to $(52)$ and the convexity of function $f_{i}^{t}\left( \cdot  \right)$, we can get
	\begin{align}
		&f_i^t(x_i^t)-f_{i}^{t}(x_{*}^{t})\le f_i^t(z_i^{t+1}+{\alpha}_i^t(v_i^t-x_i^t))-f_{i}^{t}(x_{*}^{t})\nonumber\\
		&+ \left\langle \nabla f_{i}^{t}\left( x_{i}^{t} \right),x_{i}^{t}-x_{i}^{t+1} \right\rangle.
	\end{align}
	For any fix ${\alpha}_0 \in (0,1)$, it follows that
	\begin{align}
		&f_i^t(x_i^t)-f_{i}^{t}(x_{*}^{t})\le f_i^t(z_i^{t+1}+{\alpha}_0(v_i^t-x_i^t))-f_{i}^{t}(x_{*}^{t})\nonumber\\
		&+ \left\langle \nabla f_{i}^{t}\left( x_{i}^{t} \right),x_{i}^{t}-x_{i}^{t+1} \right\rangle.
	\end{align}
	By Assumption 1, we have
	\begin{align}
		&f_i^t(x_i^t)-f_{i}^{t}(x_{*}^{t})\le f_i^t(z_i^{t+1})+\left\langle \nabla f_{i}^{t}\left( z_{i}^{t+1} \right), {\alpha}_0(v_{i}^{t}-x_{i}^{t}) \right\rangle\nonumber\\
		&+\frac{L_s}{2}{\alpha}_0^2 \left\| v_{i}^{t} \right.-\left. x_{i}^{t} \right\|^2-f_i^t(x_{*}^{t})+L\left\|x_i^t\right.-\left.x_i^{t+1}\right\|\nonumber\\
		&\le f_i^t(z_i^{t+1})-f_{i}^{t}(x_{*}^{t})+L{\alpha}_0\left\|v_i^t-x_i^t\right\|+\frac{L_s}{2}{\alpha}_0^2 \left\| v_{i}^{t} \right.-\left. x_{i}^{t} \right\|^2\nonumber\\&+L\left\|x_i^t\right.-\left.x_i^{t+1}\right\|.
	\end{align}	
	By Assumption 1, 2 and (50), we can further obtain
	\begin{align}
		&f_i^t(x_i^t)-f_{i}^{t}(x_{*}^{t})\le \left\langle \nabla f_i^t\left(z_i^{t+1} \right),z_i^{t+1}-x_{*}^{t}\right\rangle+2L{\alpha}_0M\nonumber\\&+2L_sM^2{\alpha}_0^2+L\left\|x_i^t\right.-\left.x_i^{t+1}\right\|\nonumber\\
		&\le L\left\| \sum\limits_{j=1}^{n}{{{w}_{ij}}x_{j}^{t}-x_{*}^{t}} \right\|+2L{\alpha}_0M+2L_sM^2{\alpha}_0^2\nonumber\\&+L\left\|x_i^t\right.-\left.x_i^{t+1}\right\|.
	\end{align}	
	Note that the adjacent matrix $W$ is doubly stochastic, we get
	\begin{align}
		\sum\limits_{i=1}^{n}{\sum\limits_{j=1}^{n}{{{w}_{ij}}}}\left\| x_{j}^{t} \right.-{{\left. x_{*}^{t} \right\|}}=\sum\limits_{i=1}^{n}{\left\|x_{i}^{t} \right.-{{\left.  x_{*}^{t} \right\|}}}.
	\end{align}	
	Through equations $(54)-(59)$, we can obtain that
	\begin{align}
		&\frac{1}{n}\sum\limits_{i=1}^{n}{\left[ f_{i}^{t}(x_{i}^{t})-f_{i}^{t}(x_{*}^{t}) \right]}\le \frac{L}{n}\sum\limits_{i=1}^{n}{\left\| x_{i}^{t}-x_{*}^{t} \right\|}+2L{{\alpha}_{0}}M\nonumber\\&+2{{L}_{\text{s}}}{{\alpha }_{0}}^{2}{{M}^{2}}+\frac{L}{n}\sum\limits_{i=1}^{n}{\left\| x_{i}^{t}-x_{i}^{t+1} \right\|}.
	\end{align}	
	Therefore, 
	\begin{align}
		R_{T}^{F}& \le \sum\limits_{t=1}^{T}{{{\rho }^{T-t}}}\frac{1}{n}\sum\limits_{i=1}^{n}{\left[ f_{i}^{t}(x_{i}^{t})-f_{i}^{t}(x_{*}^{t}) \right]}\nonumber\\&+\frac{2L}{n}\sum\limits_{t=1}^{T}{{{\rho }^{T-t}}} \sum\limits_{i=1}^{n}{\left\| x_{i}^{t}-\overline{{{x}^{t}}} \right\|}\nonumber\\
		&\le {L}\sum\limits_{t=1}^{T}{{{\rho }^{T-t}}}{\nu}_t
		+\sum\limits_{t=1}^{T}{{{\rho }^{T-t}}}(2L{{\alpha}_{0}}M + 2{{L}_{\text{s}}}{{\alpha}_{0}}^{2}{{M}^{2}})\nonumber\\&+\frac{L}{n}\sum\limits_{t=1}^{T}{{{\rho }^{T-t}}}\sum\limits_{i=1}^{n}{\left\| x_{i}^{t}-x_{i}^{t+1} \right\|}\nonumber\\&+\frac{2L}{n}\sum\limits_{t=1}^{T}{{{\rho }^{T-t}}}\sum\limits_{i=1}^{n}{\left\| x_{i}^{t}-\overline{{{x}^{t}}} \right\|}.
	\end{align}
	Now, let's analyze the upper bound of the last two items in $(61)$,
	\begin{align}
		\left\| x_{i}^{t} \right.-\left. x_{i}^{t+1} \right\|&=\left\| x_{i}^{t}-z_{i}^{t+1}+z_{i}^{t+1}-x_{i}^{t+1} \right\|\nonumber\\
		&\le \left\| x_{i}^{t}-z_{i}^{t+1} \right\|+\left\| \varepsilon _{i,t}^{x} \right\| \nonumber\\
		&\le \left\| x_{i}^{t}-\overline{{{x}^{t}}}\,  \right\|+\left\| \overline{{{x}^{t}}}\, -z_{i}^{t+1} \right\|+\left\| \varepsilon _{i,t}^{x} \right\|.
	\end{align}
	Then, for all agents, 
	\begin{align}
		&\sum\limits_{i=1}^{n}{\left\| x_{i}^{t} \right.-\left. x_{i}^{t+1} \right\|}\le \sum\limits_{i=1}^{n}{\left\| x_{i}^{t}-\overline{{{x}^{t}}}\,  \right\|}+\sum\limits_{i=1}^{n}{\left\| \overline{{{x}^{t}}}\, -\sum\limits_{j=1}^{n}{{{w}_{ij}}x_{j}^{t}} \right\|}
		\nonumber\\&+\sum\limits_{i=1}^{n}{\left\| \varepsilon _{i,t}^{x} \right\|}.
	\end{align}	
	Noting that $\left\| \cdot  \right\|$ is convex, we have
	\begin{align}
		&\sum\limits_{i=1}^{n}{\left\| x_{i}^{t} \right.-\left. x_{i}^{t+1} \right\|}\le \sum\limits_{i=1}^{n}{\left\| x_{i}^{t}-\overline{{{x}^{t}}}\,  \right\|}\nonumber\\&+\sum\limits_{i=1}^{n}{\sum\limits_{j=1}^{n}{{{w}_{ij}}}\left\| \overline{{{x}^{t}}}\, -x_{j}^{t} \right\|}+\sum\limits_{i=1}^{n}{\left\| \varepsilon _{i,t}^{x} \right\|}\nonumber \\
		&=2\sum\limits_{i=1}^{n}{\left\| x_{i}^{t}-\overline{{{x}^{t}}}\,  \right\|}+\sum\limits_{i=1}^{n}{\left\| \varepsilon _{i,t}^{x} \right\|}.
	\end{align}	
	Then, by Lemma 3, we can obtain
	\begin{align}
		\sum\limits_{t=1}^{T}{{{\rho }^{T-t}}}\sum\limits_{i=1}^{n}{\gamma {{\lambda }^{t-2}}}\sum\limits_{j=1}^{n}{\left\| x_{j}^{1} \right\|}\nonumber\\=\sum\limits_{i=1}^{n}{\gamma }\sum\limits_{j=1}^{n}{\left\| x_{j}^{1} \right\|}\sum\limits_{t=1}^{T}{{{\rho }^{T-t}}}{{\lambda }^{t-2}}.
	\end{align}	
	For the second term of (5), 
	\begin{equation}
		\begin{aligned}
			\sum\limits_{t=1}^{T}{{{\rho }^{T-t}}}\sum\limits_{i=1}^{n}{\frac{1}{n}\sum\limits_{j=1}^{n}{\left\| \varepsilon _{j,t-1}^{x} \right\|}}=\sum\limits_{t=1}^{T}{{{\rho }^{T-t}}}\sum\limits_{j=1}^{n}{\left\| \varepsilon _{j,t-1}^{x} \right\|}\\=\sum\limits_{t=1}^{T}{{{\rho }^{T-t}}}\sum\limits_{i=1}^{n}{\left\| \varepsilon _{i,t-1}^{x} \right\|}.
		\end{aligned}
	\end{equation}	
	Noting that ${{F}^{\text{t}}}\triangleq \left| \left\| \varepsilon _{i,t}^{x} \right\|-\left\| \varepsilon _{i,t-1}^{x} \right\| \right|$, we can get
	\begin{align}
		&\sum\limits_{t=1}^{T}{{{\rho }^{T-t}}}\sum\limits_{i=1}^{n}{\left\| \varepsilon _{i,t-1}^{x} \right\|}\le \sum\limits_{t=1}^{T}{{{\rho }^{T-t}}}\sum\limits_{i=1}^{n}{\left( \left\| \varepsilon _{i,t}^{x} \right\|+{{F}_{i}^{t}} \right)}\nonumber\\
		&=\sum\limits_{t=1}^{T}{{{\rho }^{T-t}}}\sum\limits_{i=1}^{n}{\left\| \varepsilon _{i,t}^{x} \right\|+}\sum\limits_{t=1}^{T}{{{\rho }^{T-t}}}\sum\limits_{i=1}^{n}{{F}_{i}^{t}}.\
	\end{align}	
	Now, for the last term of (5), 
	\begin{align}
		&\sum\limits_{t=1}^{T}{{{\rho }^{T-t}}}\sum\limits_{s=1}^{t-2}{{{\lambda }^{t-s-2}}}\sum\limits_{j=1}^{n}{\left\| \varepsilon _{j,s}^{x} \right\|}\nonumber\\
		&=\sum\limits_{s=1}^{T-2}{\sum\limits_{t=s+2}^{T}{{{\rho }^{T-t}}}}{{\lambda }^{t-s-2}}\sum\limits_{j=1}^{n}{\left\| \varepsilon _{j,s}^{x} \right\|}\nonumber\\
		&=\sum\limits_{s=1}^{T-2}{{{\rho }^{T-t-2}}}\sum\limits_{t=s+2}^{T}{{{\rho }^{-\left( t-s-2 \right)}}}{{\lambda }^{t-s-2}}\sum\limits_{j=1}^{n}{\left\| \varepsilon _{j,s}^{x} \right\|}\nonumber\\
		&=\sum\limits_{s=1}^{T-2}{{{\rho }^{T-t-2}}}\sum\limits_{v=0}^{T-s-2}{{{\left( \frac{\lambda }{\rho } \right)}^{v}}}\sum\limits_{j=1}^{n}{\left\| \varepsilon _{j,s}^{x} \right\|}.
	\end{align}
	Owing to $\lambda <\rho $, 
	\begin{align}
		&\sum\limits_{t=1}^{T}{{{\rho }^{T-t}}}\sum\limits_{s=1}^{t-2}{{{\lambda }^{t-s-2}}}\sum\limits_{j=1}^{n}{\left\| \varepsilon _{j,s}^{x} \right\|}\nonumber\\&\le \frac{1}{1-\frac{\lambda }{\rho }}\sum\limits_{s=1}^{T-2}{{{\rho }^{T-\text{s-}2}}}\sum\limits_{j=1}^{n}{\left\| \varepsilon _{j,s}^{x} \right\|}\nonumber\\
		&=\frac{{{\rho }^{-2}}}{1-\frac{\lambda }{\rho }}\sum\limits_{t=1}^{T-2}{{{\rho }^{T-t}}}\sum\limits_{j=1}^{n}{\left\| \varepsilon _{j,t}^{x} \right\|}\nonumber\\
		&\le \frac{{{\rho }^{-2}}}{1-\frac{\lambda }{\rho }}\sum\limits_{t=1}^{T}{{{\rho }^{T-t}}}\sum\limits_{i=1}^{n}{\left\| \varepsilon _{i,t}^{x} \right\|}.
	\end{align}	
	Therefore, through $(65)-(69)$, we obtain
	\begin{align}
		&\frac{L}{n}\sum\limits_{t=1}^{T}{{{\rho }^{T-t}}}\sum\limits_{i=1}^{n}{\left\| x_{i}^{t}-x_{i}^{t+1} \right\|}\nonumber\\
		&+\frac{2L}{n}\sum\limits_{t=1}^{T}{{{\rho }^{T-t}}}\sum\limits_{i=1}^{n}{\left\| x_{i}^{t}-\overline{{{x}^{t}}} \right\|}\nonumber\\
		&\le 4L\gamma \underset{j=1}{\mathop{\overset{n}{\mathop{\sum }}\,}}\,\parallel x_{j}^{1}\parallel \underset{t=1}{\mathop{\overset{T}{\mathop{\sum }}\,}}\,{{\rho }^{T-t}}{{\lambda }^{t-2}}+\frac{8L}{n}\underset{t=1}{\mathop{\overset{T}{\mathop{\sum }}\,}}\,{{\rho }^{T-t}}\sum\limits_{i=1}^{n}{{F}_{i}^{t}}\nonumber\\&+(\frac{9L}{n}+\frac{\frac{4L}{n}{{\rho }^{-2}}}{1-{}^{\lambda }/{}_{\rho }}\underset{t=1}{\overset{T}{\mathop )\sum }}\,{{\rho }^{T-t}}\sum\limits_{i=1}^{n}{\left\| \varepsilon _{i,t}^{x} \right\|}.
	\end{align}	
	Finally, combining the results of equations $(61)$ and $(70)$, we can obtain the upper bound of $R_T^F$ as follows
	\begin{align}
		R_{T}^{F} &\le {L}\sum\limits_{t=1}^{T}{{{\rho }^{T-t}}}{\nu}_t+\sum\limits_{t=1}^{T}{{{\rho }^{T-t}}}(2L{{\alpha }_{0}}M + 2{{L}_{\text{s}}}{{\alpha }_{0}}^{2}{{M}^{2}})\nonumber\\&+\frac{8L}{n}\underset{t=1}{\mathop{\overset{T}{\mathop{\sum }}\,}}\,{{\rho }^{T-t}}\sum\limits_{i=1}^{n}{{F}_{i}^{t}}+ 4L\gamma \underset{j=1}{\mathop{\overset{n}{\mathop{\sum }}\,}}\,\parallel x_{j}^{1}\parallel \underset{t=1}{\mathop{\overset{T}{\mathop{\sum }}\,}}\,{{\rho }^{T-t}}{{\lambda }^{t-2}}\nonumber\\&+(\frac{9L}{n}+\frac{\frac{4L}{n}{{\rho }^{-2}}}{1-{}^{\lambda }/{}_{\rho }}\underset{t=1}{\overset{T}{\mathop )\sum }}\,{{\rho }^{T-t}}\sum\limits_{i=1}^{n}{\left\| \varepsilon _{i,t}^{x} \right\|}.		
	\end{align}	
	At this point, the proof of Theorem 2 has been finished.
\end{proof}
\textit{Remark 7.} Theorem 2 gives the upper bound for DFFR for the Distributed Online Projection-free Algorithm.  If $\sum\limits_{i=1}^{n}{F}_{i}^{t}=o(1)$ , $\sum\limits_{i=1}^{n}{\left\| \varepsilon _{i,t}^{x} \right\|}=o(1)$ and ${\nu}_t=o(1)$ as $t\to \infty $, then ${{\lim }_{T\to \infty }} R_{T}^{F}\le \frac{(2L{{\alpha }_{0}}M + 2{{L}_{\text{s}}}{{\alpha }_{0}}^{2}{{M}^{2}})}{1-{\rho}}$. We note that the upper bound of $R_{T}^{F}$ is a constant about ${\alpha}_0$. Since ${\alpha}_0$ is arbitrary between 0 and 1, we can obtain that ${{\lim }_{T\to \infty }} R_{T}^{F} =0$. So we can analyze that the algorithm has good tracking performance, that is, $\underset{i=1}{\mathop{\overset{n}{\mathop{\sum }}\,}}\,\left( {{f}_{T}}\left( x_{i}^{T} \right)-{{f}_{T}}\left( x_{*}^{T} \right) \right) = 0$ as $T \to \infty$.

\section{Numerical simulations}

For a tracking system where $n$ agents track $n$ targets, let ${{z}_{i}}\left( s \right)$ and $\overset{\sim}{\mathop{{{z}_{i}}}}\,\left( s \right)$ represent the position of target $i$ and agent $i$ at times $s \in (t,t+1)$, respectively.
\begin{equation}
	{{z}_{i}}\left( s \right)=\sum\limits_{k=1}^{{{d}_{i}}}{x_{i}^{t}\left[ k \right]}c_{k}^{t}\left( s \right),
\end{equation}
\begin{equation}
	\overset{\sim}{\mathop{{{z}_{i}}}}\,\left( s \right)=\sum\limits_{k=1}^{{{d}_{i}}}{\xi _{i}^{t}\left[ k \right]}c_{k}^{t}\left( s \right),
\end{equation}
where $x_i^t \in \mathbb{R}^d$ represents the coordinate vectors of the target $i$ and  $\xi _{i}^{t}\in \mathbb{R}^d$ represents the coordinate vectors of the tracker $i$ at time $t$, respectively. $c_k^t(s)$ are vector functions that characterize the space of potential paths as they evolve over time $[t,t+1]$ and satisfy
\begin{equation}
	\int_{t}^{t+1} \left\langle c_{k}^{t}(s), c_{l}^{t}(s) \right\rangle \, ds = \begin{cases} 1, & \text{if } k = l \\ 0, & \text{otherwise}. \end{cases}
\end{equation}
At each time $t$, agent exchanges information with its neighbors to choose their location. In order to be closer to the target, reducing the selection cost $\left\langle \pi _{i}^{t},x_{i}^{t} \right\rangle$ ($\pi_i^t \in \mathbb{R}_+^d$ is the price vector.) and minimizing global losses. The cost function faced by each agent is as follows
\begin{align}
	f_{i}^{t}\left( x_{i}^{t} \right)&=\xi _{i}^{1}\left\langle \pi _{i}^{t},x_{i}^{t} \right\rangle +\xi _{i}^{2}{{\int_{t}^{t+1}{\left\| {{z}_{i}}\left( s \right)-\overset{\sim}{\mathop{{{z}_{i}}}}\,\left( s \right) \right\|}}^{2}}ds\nonumber\\
	&=\xi _{i}^{1}\left\langle \pi _{i}^{t},x_{i}^{t} \right\rangle +\xi _{i}^{2}{{\left\| x_{i}^{t}-\xi _{i}^{t} \right\|}^{2}},
\end{align}
where $\xi _{i}^{1}$ and $\xi _{i}^{2}$ represent nonnegative constants that balance the two sub-objectives.  Specific model descriptions can be found in  \cite{last}. \par
In this section, we are examining a one-dimensional system, where the dimension is $d=1$, and the constraint region is defined as $X$ being the interval $\left[ -10, 10 \right]$. Set $\xi _{i}^{1} = 0$ and $\xi _{i}^{2}=1$. We set $\xi _{i}^{t}=\frac{60}{t^2}$, but in order to adapt to more complex situations, we set different position functions for different agents. Ultimately, the objective functions faced by each agent are $f_{1}^{t}\left( x_{1}^{t} \right)=\left\| x_{1}^{t}-\frac{60}{{{t}^{2}}} \right\|^2$, $f_{2}^{t}\left( x_{2}^{t} \right)=\left\| 2x_{2}^{t}-\frac{60}{{{t}^{2}}} \right\|^2$, $f_{3}^{t}\left( x_{3}^{t} \right)=\left\| 3x_{3}^{t}-\frac{60}{{{t}^{2}}} \right\|^2$ and $f_{4}^{t}\left( x_{4}^{t} \right)=\left\| 6x_{4}^{t}-\frac{60}{{{t}^{2}}} \right\|^2$. We set up four agents and built a undirected connected network, in which the non-zero element with the smallest adjacency matrix is $0.22$, as shown in Figure 1. 
\begin{figure}[h]
	\centering
	\includegraphics[width=3.2in]{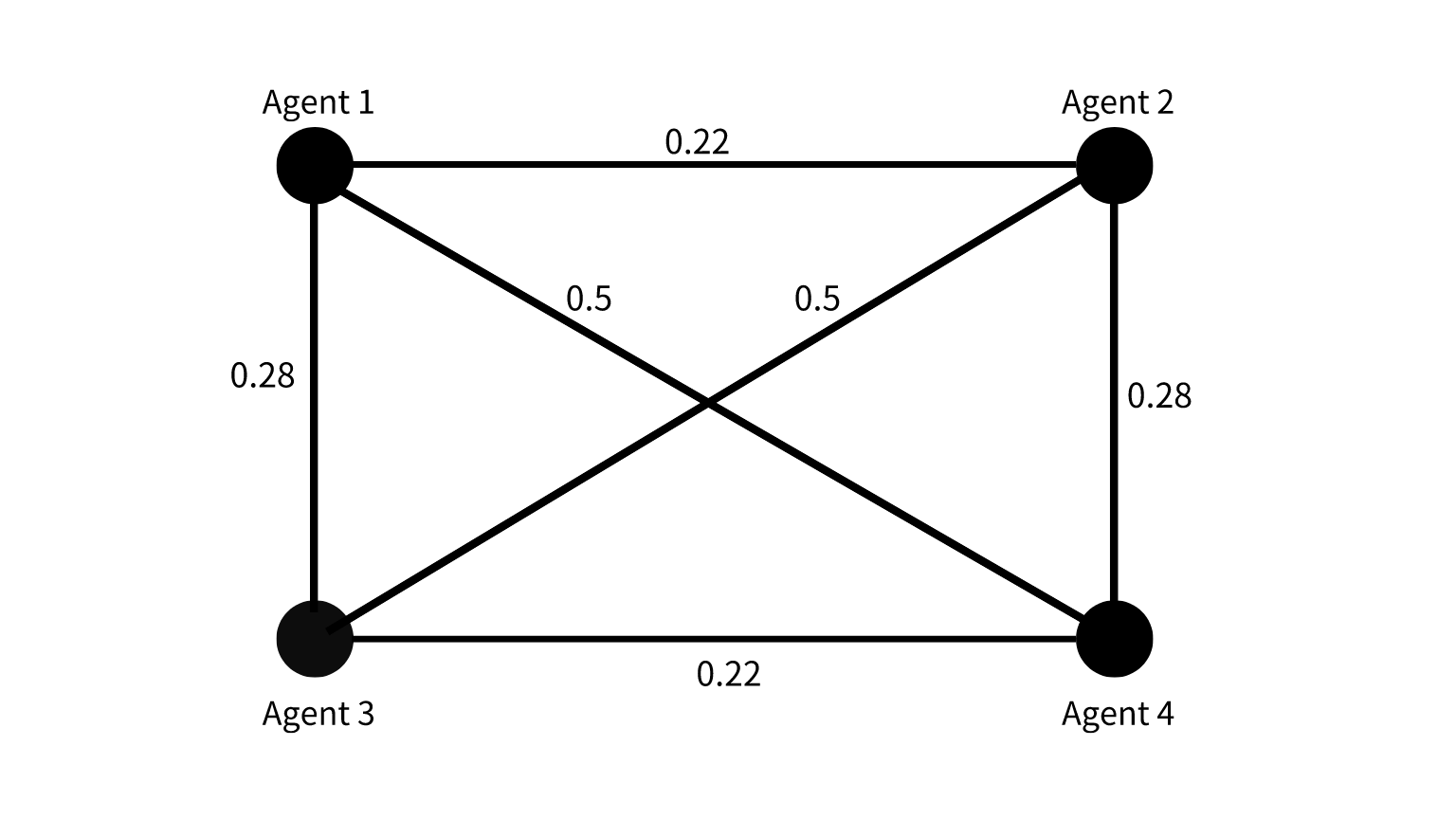}
	\caption{The topological structure of undirected graphs.}
	\label{fig1}
\end{figure}
So the $\lambda $ as defined in (3) is $0.98625$, we chose $\rho =0.9875$ and $B=1$. At the same time, we set the ${\alpha}_t=\frac{2}{t^{\frac{1}{2}}}$. Under Algorithm 1, we choose ${\delta}=0.01$. Under Algorithm 2, we set a fixed step size between 0 and 1 in advance as $0.002$. Therefore, the conditions of Theorem $1-2$ are satisfied. 
We consider the condition for the consistency of agent decision is that the difference between the maximum decision and the minimum decision is less than 0.001. Although facing different time-varying objective functions, the decisions generated by the proposed algorithms reached agreement at $t$=21 and $t$=15 respectively at a very fast speed, as shown in Figure 2 and 3.

\begin{figure}[h]
	\centering
	\includegraphics[width=3.2in]{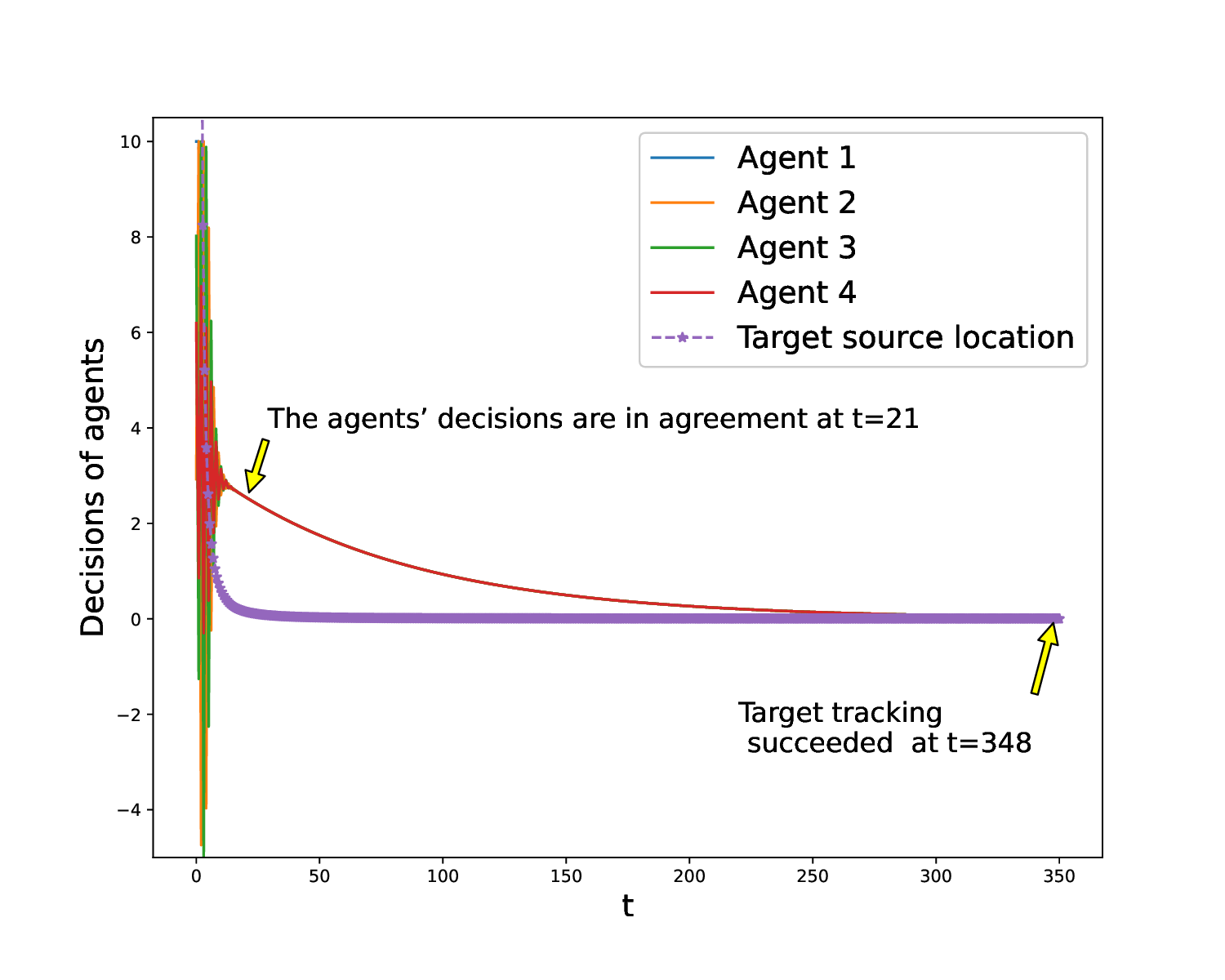}
	\caption{The decisions of Algorithm 1.}
	\label{fig1}
\end{figure}
 
\begin{figure}[h]
 	\centering
 	\includegraphics[width=2.6in]{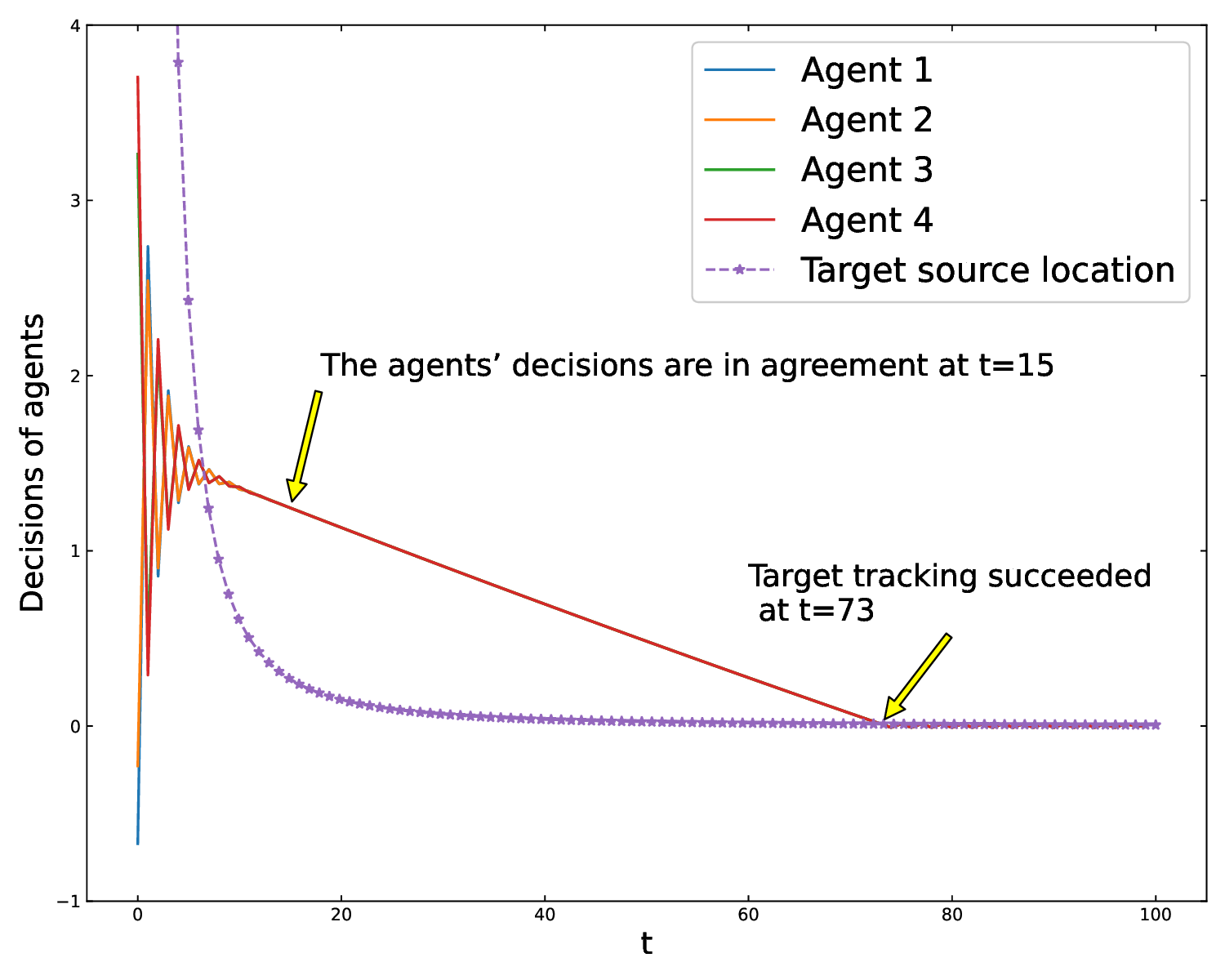}
 	\caption{The decisions of Algorithm 2.}
 	\label{fig1}
\end{figure}

Assuming that the difference between the agent's decision and the target is less than 0.001, we determine that the target tracking is successful. Through comparison, it can be found that Algorithm 2 is faster than Algorithm 1 in tracking the target. Algorithm 1 needs at least 348 iterations, while Algorithm 2 only needs 73 iterations.  

The DFFR of Algorithm 1-2 show good convergence, as shown in Figure 4. Compared with Distributed Online Gradient Decent Algorithm, Algorithm 2 has similar convergence and tracking performance, while Algorithm 1 has a slower convergence speed due to the lack of the information of gradient, as shown in Figure 4 and 5.
\begin{figure}[h]
	\centering
	\includegraphics[width=3.2in]{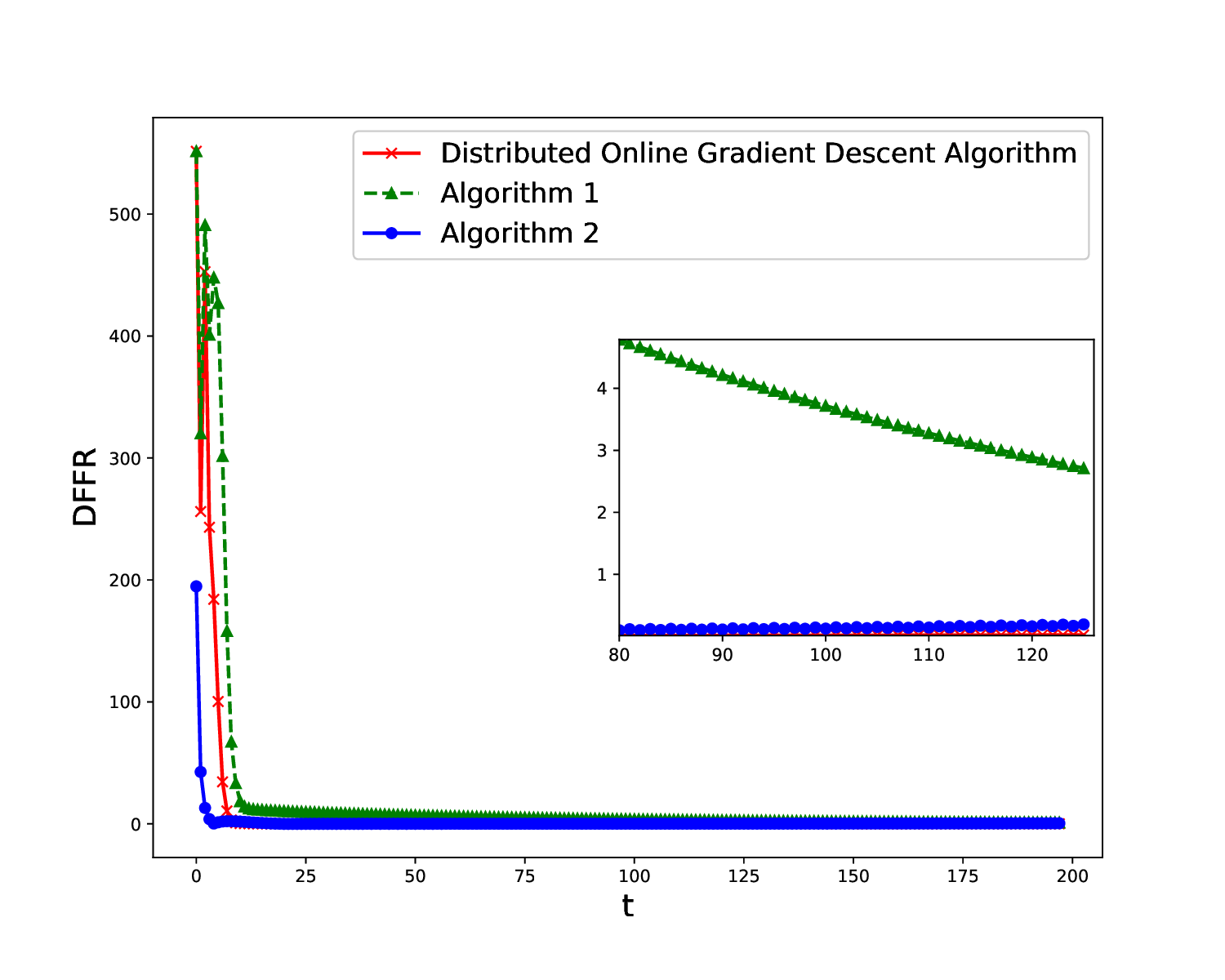}
	\caption{DFFR of Algorithms.}
	\label{fig1}
\end{figure}
\begin{figure}[h]
	\centering
	\includegraphics[width=3.2in]{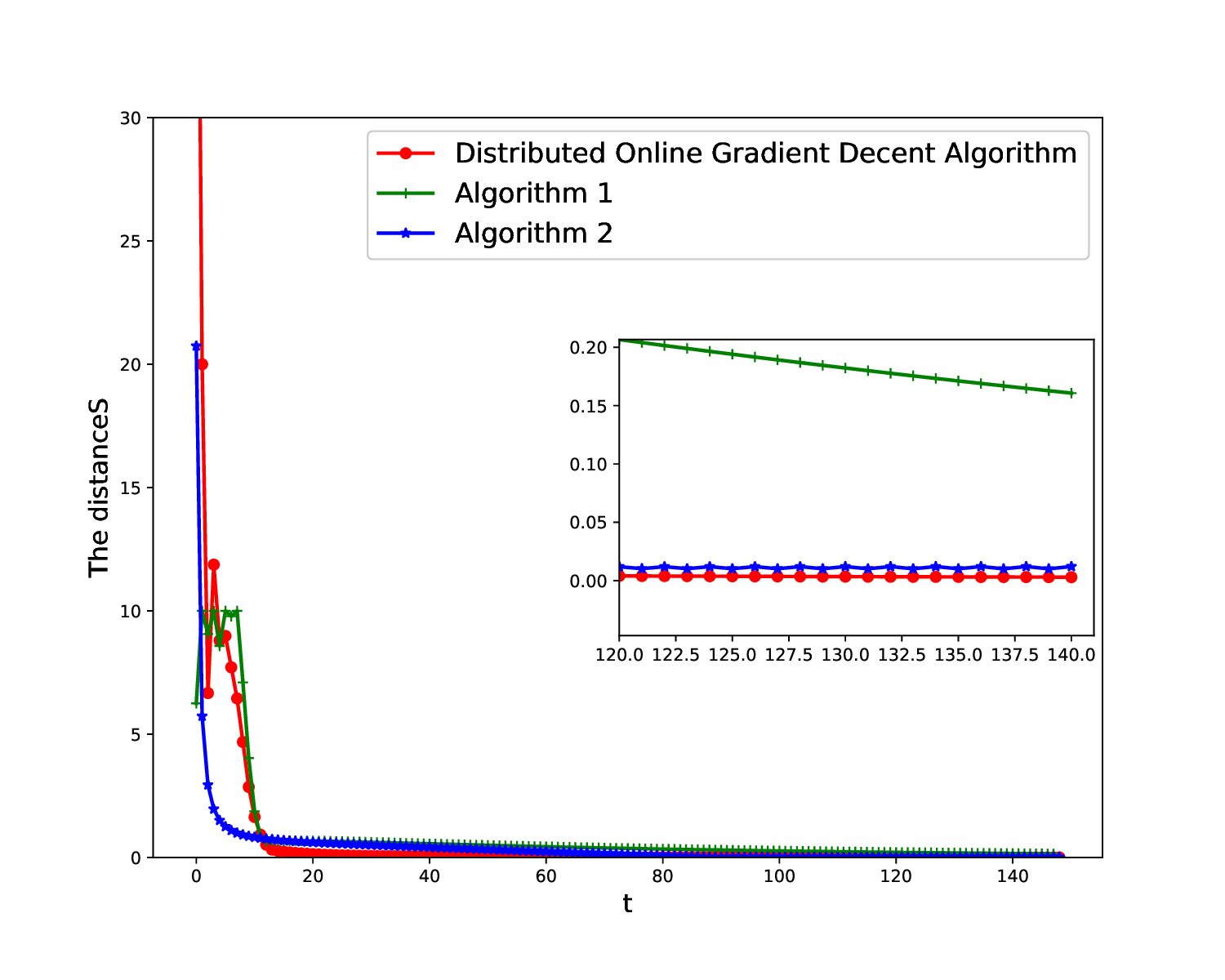}
	\caption{$\frac{1}{n}\sum\limits_{i=1}^{n}{\left\| x_{i}^{t}-x_{*}^{t} \right\|}$.}
	\label{fig1}
\end{figure}

Meanwhile, for Distributed Online Gradient Decent Algorithm, we also compared its convergence under classical regret and DFFR conditions, as shown in Figure 6. Through DFFR, we can know the minimum time for the agent decisions to converge to the optimal decision.
If the tracking error is less than 0.01, the minimum time required to meet this tracking condition increases as $\rho$ increases, and it is the largest when $\rho$ is not present.
By comparison, we can obtain that DFFR has better tracking performance.

\begin{figure}[!tb]
	\centering
	\includegraphics[width=3.2in]{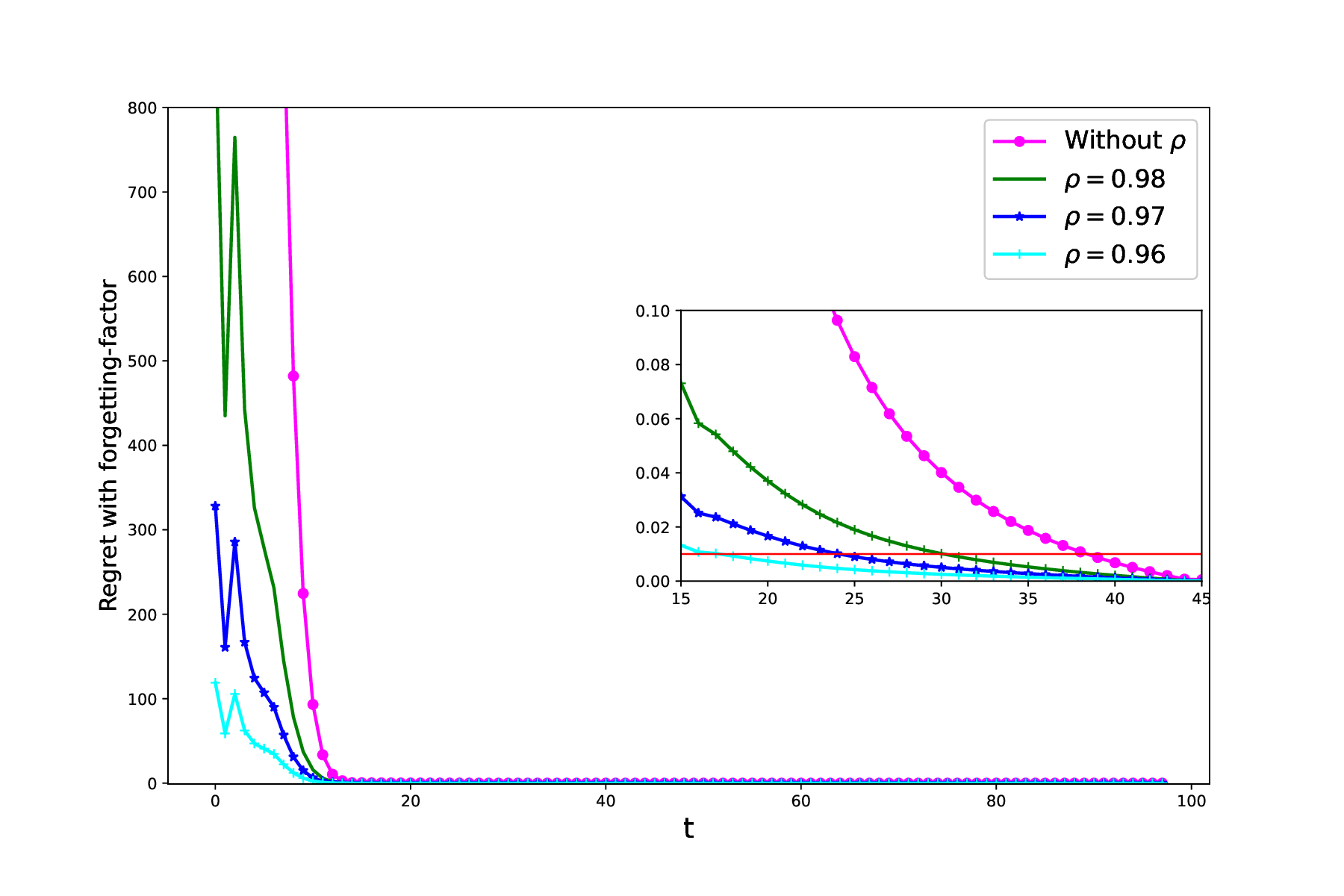}
	\caption{DFFR of Distributed Online Gradient Decent Algorithm with ${\rho}=0.98, {\rho}=0.97, {\rho}=0.96$ and without ${\rho}$.}
	\label{fig5}
\end{figure}

\section{Conclusions}\label{sec5}
In this paper, we propose a new metric for measuring distributed online optimization algorithms, namely the distributed forgetting-factor regret (DFFR). Firstly, we propose Distributed Online Gradient-free Algorithm and Distributed Online Projection-free Algorithm. Then, we derive the upper bounds of the DFFR and provide mild conditions for the regret to approach zero or to have a tight bound as $T\to \infty $. Finally, we validate the effectiveness of the algorithms and the superior performance of DFFR through experimental simulation.  In the future, we will extend our results to both direct and switching topology scenarios.\par

\section*{Acknowledgments}
 This work is supported by the National Natural Science Foundation of China under Grant No. 62473009.

\end{document}